\def\singlespace{\def\baselinestretch{1.5}\@normalsize}
\newtheorem{theorem}{{Theorem}}
\newtheorem{remark}{{Remark}}
\renewcommand{\baselinestretch}{1.9}
\def\marginnote#1{\setbox0=\vtop{\hsize4pc
\small\raggedright\noindent\baselineskip9pt \rightskip=0.5pc plus
1.5pc #1}\leavevmode \vadjust{\dimen0=\dp0
\kern-\ht0\hbox{\kern-4.00pc\box0}\kern-\dimen0}}
\def\lboxit#1{\vbox{\hrule\hbox{\vrule\kern6pt
\vbox{\kern6pt#1\kern6pt}\kern6pt\vrule}\hrule}}
\begin{document}
\thispagestyle{empty}
\begin{center}
{\Large \textbf{Inference for a New Signed Integer Valued Autoregressive  Model Based on Pegram's Operator}}
\end{center}
\vskip 10 pt \centerline{\sc Yinong  Wu$^1$, Dehui Wang$^2$ }
\begin{footnotetext}
{\hspace*{-0.25 in}
$^1$School of Mathematics, Jilin University, 2699 Qianjin Street, Changchun, 130012, Jilin, P.R.China.\\
$^2$School of Mathematics, Liaoning University, 66 Chongshan Middle Road, Shenyang, 110000, Liaoning. P.R.China.\\}
\end{footnotetext}

{\bf Abstract.}
In the current study, a brand-new SINARS(1) model  is proposed for stationary discrete time series defined on $\boldsymbol{Z}$,  based on extended binomial distribution and the Pegram's operator. The model effectively characterizes the series of positive and negative integer values generated after differencing some non-stationary time series. The model's attributes are addressed. For the parameter estimation of the model, the conditional maximum likelihood method and Yule-Walker method are taken into consideration. And we prove the asymptotic normality of CML method. By using these two methods, we simulate our model comparing with some relevant ones proposed before. The model can deal with positive  or negative autocorrelation data.  The analysis of the number of differenced daily new cases in Barbados is done using the suggested model. \\
{\it Key words and phrases}: Asymptotic distribution, Skellam distribution, Extended binomial distribution, Pegram's operator, CMLE.\\

\section{Introduction }
It is well fact that modeling count time series has a crucial role in both financial and medical applications. In some cases, some non-stationary time series    produces series of positive and negative integer values after differenced. Therefore, the modeling of integer value series is extremely important.There are numerous models used to analyze discrete time series defined on $\boldsymbol{Z}$. For instance, Andersson  and Karlis\cite{andersson2014parametric} presented the SINAR process with Skellam innovations (SINARS), which can handle data described on both positive and negative integers. An integer-valued autoregressive process of order $p$ with a signed binomial thinning operator (INARS(p)) was also introduced by Kachour et al.\cite{kachour2011p}, which expanded the order from one to $p$. And as expansion of INARS(p), Zhang et al.\cite{zhang2010inference} proposed INAR(p) process with signed generalized power series thinning operator, which   can dealt with negative integer-valued time series. As supplementary, 
Wang et al.\cite{wang2010generalized} introduced generalized RCINAR(p) process with signed thinning operator.
Furthermore, Chesneau and Kachour \cite{chesneau2012parametric} considered the P-INAR(1) process with Rademacher(p) distribution to deal with data on $\boldsymbol{Z}$.
  
The ZOIPLINAR(1) model, which was developed by Mohammadi et al.\cite{mohammadi2022zero} in 2021, is a stationary INAR(1) model with zero-and-one inflated Poisson-Lindley distributed innovations. 
An integer-valued time series model of order one with an innovation structure of the zero-and-one inflated type is proposed  to describe nonnegative data  with plenty of zero and one. The analysis of two medical series, including the quantity of fresh COVID-19-infected series from Barbados and data on Poliomyelitis in \cite{mohammadi2022modeling}, is then conducted using the zero-and-one-inflated INAR model.

An INAR(1) model based on the combination of Pegram and thinning operators (MPT) with serially dependent innovation was introduced by  Shirozhan and  Mohammadpour \cite{shirozhan2020inar}, which offered additional flexibility in empirical modeling.

Since the PDINAR model is most used processing series with Skellam marginal distribution, we want to extend it to unconstrained marginal distribution.  We integrate the PDINAR model with the Pegram operator based on these studies. The following equation describes the PDINAR(1) time series model:
$
Z_{t}=\delta S_{\alpha,\theta}\left(Z_{t-1}\right)+\varepsilon_{t},
$
where $\delta$ is the sign of autocorrelation, $S_{\alpha,\theta}(Z)$ is extended binomial operator and $\{\varepsilon_t\}$ is a sequence of independent and identically  Skellam distributed random variables with mean $\theta_1-\theta_2$ and variance $\theta_1+\theta_2$.
It is proposed by Alzaid and Omair\cite{alzaid2014poisson} and applied in financial data,  Saudi 
Telecommunication Company (STC) stock and the Electricity stock. 


In this paper, we  give a new signed integer-valued autoregressive process based on the  Pegram's operator. The remainder of the paper is organized as follows.
In this paper, we first show  preparation information.
Section 2 presents definition and fundamental properties for extended binomial distribution, Skellam distribution and properties of the model we proposed. 
Section 3 gives conditional maximum likelihood method (CML) and Yuler-Walker method  for estimating the unknown parameters. It is shown that CML method is consistent and asymptotically normal.
Section 4 reports some simulation results. 
Section 5 gives applications for
the number of new cases in Barbodos.
Section 6 shows conclusion of results and prospect work to do. 
Details of proofs are given in Appendix.

\section{Definition and Basic Properties}
In this part, we first show the definition of Skellam distribution,  afterward  we present essential properties for the two distributions and then for the model. 
\subsection{Skellam Distribution}
As it is defined in \cite{jg1946frequency}, a random variable $Z$ in $\mathbf{Z}$ has Skellam distribution with parameters $\theta_{1} \geq 0$ and $\theta_{2} \geq 0$ if \\
$P(Z=z)=e^{-\theta_{1}-\theta_{2}}\left(\frac{\theta_{1}}{\theta_{2}}\right)^{z / 2} I_{z}\left(2 \sqrt{\theta_{1} \theta_{2}}\right), \quad z=\ldots,-1,0,1, \ldots$\\
where  
\begin{equation}
I_{y}(x)=\left(\frac{x}{2}\right)^{y} \sum_{k=0}^{\infty} \frac{\left(\frac{x^{2}}{4}\right)^{k}}{k !(y+k) !} 
\end{equation}
is the modified Bessel function  of the first kind. For convenience, we use $PD(\theta_1,\theta_2)$ to describe Skellam distribution.
\begin{remark}
Alzaid et al.\cite{alzaid2010poisson} showed another equivalent formular of Skellam distribution
$
P(Z=z)=e^{-\theta_1-\theta_2}\left(\theta_1 \theta_2\right)^{\max \{0,-z\}} \theta_1^{y}{}_{0}\widetilde{F}_{1}\left(|z|+1, \theta_1 \theta_2\right), \quad z=\cdots,-1,0,1, \cdots
$
\end{remark}
\subsection{Extended Binomial Distriution}
To consider the orginal definition of EB distribution first proposed by Alzaid and Omair \cite{alzaid2012extended} we can
let
$W \sim PD(\theta_1,\theta_2)$, be independent of
$R \sim PD(\theta_3,\theta_4)$,
then the sum
$Z = W+R \sim PD(\theta_1+\theta_3,\theta_2+\theta_4)$,\\
then, the conditional distribution, $W|Z\sim Conditional PD(z,\theta_1,\theta_2,\theta_3,\theta_4)$ i.e.,
\begin{align}
\notag & P(W=w | Z=z)= \frac{P(W=w) P(R=z-w)}{P(Z=z)} \\
\notag & \quad \quad \quad  \quad \quad \quad \quad \quad = \frac{\left(\frac{\theta_1}{\theta_1+\theta_3}\right)^w{ }_0^w \widetilde{F}_1\left(; w+1 ; \theta_1 \theta_2\right)\left(\frac{\theta_3}{\theta_1+\theta_3}\right)^{z-w}{ }_0 \widetilde{F}_1\left(; z-w+1 ; \theta_3 \theta_4\right)}{{ }_0 \widetilde{F}_1\left(; z+1 ;\left(\theta_1+\theta_3\right)\left(\theta_2+\theta_4\right)\right)}, \\
& \quad \quad \quad  \quad \quad \quad \quad \quad \quad
w=\ldots,-1,0,1, \ldots
\end{align}
By  the constraint $\theta_1\theta_4 = \theta_2\theta_3$ and the following reparametrization:\\
$z=z$,
$p = \frac{\theta_1}{\theta_1+\theta_3}$,
$q = 1-p$,
$\theta = (\theta_1+\theta_3)(\theta_2+\theta_4)$, we derive the EB distribution.

A random variable $X$ in $\mathbf{Z}$ has extended binomial distribution with parameters $0<p<1$, $q = 1-p$, $\theta>0$ and $z \in \mathbf{Z}$, denoted by $X \sim E B(z, p, \theta)$ if 
\begin{equation}
P(X=x)=\frac{p^{x} q^{z-x}{}_{0}\widetilde{F}_{1}\left(; x+1 ; p^{2} \theta\right){}_{0}\widetilde{F}_{1}\left(; z-x+1 ; q^{2} \theta\right)}{_{0} \widetilde{F}_{1}(; z+1 ; \theta)} \qquad    x=\ldots,-1,0,1, \ldots
\end{equation}
where the regularized hypergeometric function $_{0}\widetilde{F}_{1}$ is defined as:
\begin{equation}
_{0}\widetilde{F}_{1}\left(;y; \theta\right)=\sum_{k=0}^{\infty} \frac{\theta^{k}}{k ! \Gamma(y+k)}.
\end{equation}
For convenience, we use $EB(z,p,\theta)$ to describe extended binomial distribution.

\subsubsection{Extended Binomial Thinning Operator}
The extended binomial thinning operator raised by Alzaid and Omair \cite{alzaid2014poisson} has the following presentation
\begin{equation}
S_{\alpha, \theta}(Z)=(\operatorname{sgn} Z) \sum_{i=1}^{|Z|} Y_{i}+\sum_{i=1}^{W(Z)} B_{i}
\end{equation}
where $Y_{i}$ is a sequence of i.i.d.random variables, independent of $B_{i}$, $Z$ and $W(Z)$, such that $P\left(Y_{i}=1\right)=1-P\left(Y_{i}=0\right)=\alpha$, $\left\{B_{i}\right\}$ is a sequence of i.i.d.random variables independent of $Y_{i}$, $Z$ and $W(Z)$ such that\\
$P\left(B_{i}=1\right)=P\left(B_{i}=-1\right)=\alpha(1-\alpha)$ and $P\left(B_{i}=0\right)=1-2 \alpha(1-\alpha)$,\\
and $W(Z)|Z=z$ is a random variable having Bessel distribution \cite{iliopoulos2003simulation} with parameters $(|z|, \theta)$. 

Since $\sum_{i=1}^{|Z|} Y_{i}|Z=z \sim \operatorname{\emph{binomial}}(|z|, \alpha)$, $\sum_{i=1}^{W(Z)} B_{i} |Z=z$ has the distribution with characteristic function given by 
\begin{equation}
\Phi(t)=\frac{{ }_{0} \widetilde{F}_{1}\left(;|z|+1 ; \theta\left(\alpha (1-\alpha) e^{i t}+\alpha(1-\alpha)e^{-i t}+1-2 \alpha(1-\alpha)\right)\right)}{{ }_{0} \widetilde{F}_{1}(;|z|+1 ; \theta)}.
\end{equation}
And it is clear that $S_{\alpha, \theta}(Z)|Z=z\sim EB(z, \alpha, \theta)$.

\subsection{Models and Properties}

Let us define the stationary MESINAR(1) ${Z_t}$ as
\begin{equation}
Z_{t}=\left(\phi,\delta S_{\alpha,\theta}\left(Z_{t-1}\right)\right)*\left(1-\phi,\varepsilon_{t}\right)
\end{equation}
where $\delta = 1$ or $-1$ decided by the sign of the correlation, 
the Pegram operator '*' mix two independent P and Q with the respective mixing weight of $\phi$ and $1-\phi$, to produce a random variable $Z=(\phi,P)*(1-\phi,Q)$. $\{\varepsilon_t\}$ is a sequence of independent and identically  Skellam distributed random variables with mean $\theta_1-\theta_2$ and variance $\theta_1+\theta_2$.

Since the process is Markovian, the one step transition probability is 
\begin{align}
\notag P(Z_t=z_t|Z_{t-1}=z_{t-1}) & =\phi P\left(\delta S_{\alpha,\theta}(Z_{t-1}) = z_t\right) + (1-\phi)P(\varepsilon_{t} =  z_t) \\
\notag & =\phi P\left( S_{\alpha,\theta}(Z_{t-1}) = \delta z_t\right) + (1-\phi)P(\varepsilon_{t} =  z_t) \\
\notag & = \phi p^{\delta z_{t}}(1-p)^{z_{t-1}-\delta z_{t}}\frac{{}_{0}\widetilde{F}_{1}(; \delta z_{t}+1 ; p^{2}\theta){}_{0}\widetilde{F}_{1}(; z_{t-1}-\delta z_{t}+1 ; (1-p)^{2}\theta)}{{}_{0}\widetilde{F}_{1}(; z_{t-1}+1 ; \theta)} \\ 
& \quad 
+ (1-\phi)e^{-\theta_{1}-\theta_{2}}(\theta_{1}\theta_{2})^{\max(0,-  z_{t})}\theta_{1}^{z_{t}}{}_{0}\widetilde{F}_{1}(; |z_{t-1}|+1 ; \theta_{1}\theta_{2}) 
\end{align}

The conditional mean is
\begin{equation}
E[Z_t|Z_{t-1}]=\phi p \delta Z_{t-1}+(1-\phi)(\theta_1-\theta_2).
\end{equation}

The conditional variance is 
%
\begin{align}
\notag Var[Z_t|Z_{t-1}] &=\phi \left( p (1-p)Z_{t-1}+2 p (1-p) \theta \frac{{ }_0 \widetilde{F}_1(; Z_{t-1}+2 ; \theta)}{{ }_0 \widetilde{F}_1(; Z_{t-1}+1 ; \theta)} \right) +(1-\phi)(\theta_1+\theta_2) \\
& \quad +\phi(1-\phi)(p\delta Z_{t-1}-(\theta_1-\theta_2))^2.
\end{align}

The unconditional mean is
\begin{equation}
E[Z_t]=\frac{(1-\phi)(\theta_1-\theta_2)}{1-\phi p \delta}.
\end{equation}

The unconditional variance is 
\begin{align}
 \notag Var[Z_{t}] &=\frac{p(1-p)\phi(1-\phi)(\theta_{1}-\theta_{2})}{(1-\phi^2 p^2-(1-\phi )p^2)(1-\phi p \delta)} 
 - \frac{2\phi(1-\phi)^3p\delta(\theta_{1}-\theta_{2})^2}{(1-\phi^2 p^2-(1-\phi )p^2)(1-\phi p \delta)} \\
\notag & \quad -\frac{(1-\phi)^{2}(\theta_{1}-\theta_{2})^{2}}{(1-\phi p \delta)^{2}}
 +\frac{2\phi p(1-p)\theta}{(1-\phi^2 p^{2}-(1-\phi)p^2)} E_{Z_{t-1}}[\frac{_{0} \widetilde{F}_{1}\left(;Z_{t-1}+2 ; \theta\right)}{_{0} \widetilde{F}_{1}\left(;Z_{t-1}+1 ; \theta\right)}] \\
 & \quad +\frac{(1-\phi)(\theta_{1}+\theta_{2}+(2-\phi)(\theta_{1}
 -\theta_{2})^{2})}{(1-\phi^2 p^{2}-(1-\phi)p^2)}.      
\end{align}

For any $k$ in $\mathbf{Z}$, the covariance is 
\begin{equation}
\gamma_{k}=Cov\left[Z_{t+k}, Z_{t}\right]=(\phi p \delta)^{k}Var[Z_{t}].
\end{equation}
\section{Estimation}
Next we estimate the unknown parameters by two methods: CML estimation method and Yule-Walker method. All estimates are based on the observed data $(Z_0,Z_1,\ldots,Z_n)$.

\subsection{CML Estimation Method}

\subsubsection{CML Estimation for the Model}
For the sake of simplicity,  let $\beta^2 = \theta$,  $\boldsymbol{\omega} = (\phi,p,\beta,\theta_{1},\theta_{2})^\top$, and  $\boldsymbol{\omega}_{0}$ is the true parameter value of $\boldsymbol{\omega}$.
The conditional likelihood for the model is defined by
\begin{align}
\notag\ L_{n}(\boldsymbol{\omega}) & = P(Z_{0}=z_{0})\prod_{t=1}^{n}P(Z_{t}=z_{t}|Z_{t-1}=z_{t-1})\\ \notag\
& = P(Z_{0}=z_{0})\prod_{t=1}^{n}\phi P(S_{p,\beta^2}(Z_{t-1})= \delta z_{t})+(1-\phi)P(\varepsilon_{t}=z_{t}) \\ \notag\
& = P(Z_{0}=z_{0})\prod_{t=1}^{n} \phi p^{\delta z_{t}}(1-p)^{z_{t-1}-\delta z_{t}}\frac{{}_{0}\widetilde{F}_{1}(; \delta z_{t}+1 ; p^{2}\theta){}_{0}\widetilde{F}_{1}(; z_{t-1}-\delta z_{t}+1 ; (1-p)^{2}\beta^2)}{{}_{0}\widetilde{F}_{1}(; z_{t-1}+1 ; \beta^2)}\\
& \quad  \quad  \quad \quad  \quad \quad  \quad \quad  +  (1-\phi)e^{-\theta_{1}-\theta_{2}}(\theta_{1}\theta_{2})^{\max\{0,-z_{t}\}}\theta_{1}^{z_{t}}{}_{0}\widetilde{F}_{1}(; |z_{t-1}|+1 ; \theta_{1}\theta_{2})
\end{align}
\begin{remark}
We can use the modified Bessel function of the first kind to rewrite the transition probability in order to estimate $\boldsymbol{\omega}$. Recall that ${\beta}^2 = \theta$ $(\theta>0,\beta>0)$, thus
\begin{align}
\notag P(Z_{t}=z_{t}|Z_{t-1}=z_{t-1})
& = \phi P(S_{\alpha,\theta}(Z_{t-1})= \delta z_{t})+(1-\phi)P(\varepsilon_{t}=z_{t}) \\ \notag
& = \phi \frac{I_{\delta z_t}(2 p \beta) I_{z_{t-1}-\delta z_t}(2(1-p) \beta)}{I_{z_{t-1}}(2 \beta)}\\ 
& \quad  +(1-\phi)e^{-\theta_1-\theta_2}\left(\frac{\theta_1}{\theta_2}\right)^{z_t / 2} I_{z_{t}}\left(2 \sqrt{\theta_1 \theta_2}\right).
\end{align}
\end{remark}
Moreover, $l_{n}(\boldsymbol{\omega})$ is the conditonal log-likelihood function and is given by
$$
\frac{\partial l_{n}(\boldsymbol{\omega})}{\partial \phi} = \sum_{t=1}^{n} \frac{\frac{I_{\delta z_{t}}(2 \beta  p) I_{z_{t-1}-{\delta z_{t}}}(2 \beta  (1-p))}{I_{z_{t-1}}(2 \beta )}-e^{-\theta_1-\theta_2} \left(\frac{\theta_1}{\theta_2}\right)^{{z_{t}}/2} I_{z_{t}}\left(2 \sqrt{\theta_1 \theta_2}\right)}{\frac{\phi  I_{\delta z_{t}}(2 \beta  p) I_{{z_{t-1}}-{\delta z_{t}}}(2 \beta  (1-p))}{I_{z_{t-1}}(2 \beta )}+(1-\phi ) e^{-\theta_1-\theta_2} \left(\frac{\theta_1}{\theta_2}\right)^{{z_{t}}/2} I_{z_{t}}\left(2 \sqrt{\theta_1 \theta_2}\right)},
$$
\begin{align*}
\frac{\partial l_{n}(\boldsymbol{\omega})}{\partial p} &=  \sum_{t=1}^{n}
\frac{\frac{\beta  \phi  (I_{{\delta z_{t}}-1}(2 \beta  p)+I_{{\delta z_{t}}+1}(2 \beta  p)) I_{{z_{t-1}}-{\delta z_{t}}}(2 \beta  (1-p))}{I_{z_{t-1}}(2 \beta )}}{\frac{\phi  I_{\delta z_{t}}(2 \beta  p) I_{{z_{t-1}}-{\delta z_{t}}}(2 \beta  (1-p))}{I_{z_{t-1}}(2 \beta )}+(1-\phi ) e^{-\theta_1-\theta_2} \left(\frac{\theta_1}{\theta_2}\right)^{{z_{t}}/2} I_{z_{t}}\left(2 \sqrt{\theta_1 \theta_2}\right)} \\
& \quad \quad -\frac{\frac{\beta  \phi  I_{\delta z_{t}}(2 \beta  p) (I_{{z_{t-1}}-{\delta z_{t}}-1}(2 \beta  (1-p))+I_{{z_{t-1}}-{\delta z_{t}}+1}(2 \beta  (1-p)))}{I_{z_{t-1}}(2 \beta )}}{\frac{\phi  I_{\delta z_{t}}(2 \beta  p) I_{{z_{t-1}}-{\delta z_{t}}}(2 \beta  (1-p))}{I_{z_{t-1}}(2 \beta )}+(1-\phi ) e^{-\theta_1-\theta_2} \left(\frac{\theta_1}{\theta_2}\right)^{{z_{t}}/2} I_{z_{t}}\left(2 \sqrt{\theta_1 \theta_2}\right)},
\end{align*}

\begin{align*}
\frac{\partial l_{n}(\boldsymbol{\omega})}{\partial \beta} &= \sum_{t=1}^{n}
\frac{\frac{(-1) \phi  (I_{{z_{t-1}}-1}(2 \beta )+I_{{z_{t-1}}+1}(2 \beta )) I_{\delta z_{t}}(2 \beta  p) I_{{z_{t-1}}-{\delta z_{t}}}(2 \beta  (1-p))}{I_{z_{t-1}}(2 \beta ){}^2}}{\frac{\phi  I_{\delta z_{t}}(2 \beta  p) I_{{z_{t-1}}-{\delta z_{t}}}(2 \beta  (1-p))}{I_{z_{t-1}}(2 \beta )}+(1-\phi ) e^{-\theta_1-\theta_2} \left(\frac{\theta_1}{\theta_2}\right)^{{z_{t}}/2} I_{z_{t}}\left(2 \sqrt{\theta_1 \theta_2}\right)} \\
& \quad \quad +\frac{ 
\frac{(1-p) \phi  I_{\delta z_{t}}(2 \beta  p) (I_{{z_{t-1}}-{\delta z_{t}}-1}(2 \beta  (1-p))+I_{{z_{t-1}}-{\delta z_{t}}+1}(2 \beta  (1-p)))}{I_{z_{t-1}}(2 \beta )}}
{\frac{\phi  I_{\delta z_{t}}(2 \beta  p) I_{{z_{t-1}}-{\delta z_{t}}}(2 \beta  (1-p))}{I_{z_{t-1}}(2 \beta )}+(1-\phi ) e^{-\theta_1-\theta_2} \left(\frac{\theta_1}{\theta_2}\right)^{{z_{t}}/2} I_{z_{t}}\left(2 \sqrt{\theta_1 \theta_2}\right)} \\
& \quad \quad +\frac{\frac{p \phi  (I_{\delta {z_{t}}-1}(2 \beta  p)+I_{{\delta z_{t}}+1}(2 \beta  p)) I_{{z_{t-1}}-{\delta z_{t}}}(2 \beta  (1-p))}{I_{z_{t-1}}(2 \beta )}}{\frac{\phi  I_{\delta z_{t}}(2 \beta  p) I_{{z_{t-1}}-{\delta z_{t}}}(2 \beta  (1-p))}{I_{z_{t-1}}(2 \beta )}+(1-\phi ) e^{-\theta_1-\theta_2} \left(\frac{\theta_1}{\theta_2}\right)^{{z_{t}}/2} I_{z_{t}}\left(2 \sqrt{\theta_1 \theta_2}\right)},
\end{align*}

\begin{align*}
\frac{\partial l_{n}(\boldsymbol{\omega})}{\partial \theta_1} &= \sum_{t=1}^{n}
\frac{(1-\phi ) \left(-e^{-\theta_1-\theta_2}\right) \left(\frac{\theta_1}{\theta_2}\right)^{{z_{t}}/2} I_{z_{t}}\left(2 \sqrt{\theta_1 \theta_2}\right)}{\frac{\phi  I_{\delta z_{t}}(2 \beta  p) I_{{z_{t-1}}-{\delta z_{t}}}(2 \beta  (1-p))}{I_{z_{t-1}}(2 \beta )}+(1-\phi ) e^{-\theta_1-\theta_2} \left(\frac{\theta_1}{\theta_2}\right)^{{z_{t}}/2} I_{z_{t}}\left(2 \sqrt{\theta_1 \theta_2}\right)} \\
& \quad \quad  +\frac{
\frac{\theta_2 (1-\phi ) e^{-\theta_1-\theta_2} \left(\frac{\theta_1}{\theta_2}\right)^{{z_{t}}/2} \left(I_{{z_{t}}-1}\left(2 \sqrt{\theta_1 \theta_2}\right)+I_{{z_{t}}+1}\left(2 \sqrt{\theta_1 \theta_2}\right)\right)}{2 \sqrt{\theta_1 \theta_2}}}{\frac{\phi  I_{\delta z_{t}}(2 \beta  p) I_{{z_{t-1}}-{\delta z_{t}}}(2 \beta  (1-p))}{I_{z_{t-1}}(2 \beta )}+(1-\phi ) e^{-\theta_1-\theta_2} \left(\frac{\theta_1}{\theta_2}\right)^{{z_{t}}/2} I_{z_{t}}\left(2 \sqrt{\theta_1 \theta_2}\right)}\\
& \quad \quad +\frac{\frac{{z_{t}} (1-\phi ) e^{-\theta_1-\theta_2} \left(\frac{\theta_1}{\theta_2}\right)^{\frac{{z_{t}}}{2}-1} I_{z_{t}}\left(2 \sqrt{\theta_1 \theta_2}\right)}{2 \theta_2}}{\frac{\phi  I_{\delta z_{t}}(2 \beta  p) I_{{z_{t-1}}-{\delta z_{t}}}(2 \beta  (1-p))}{I_{z_{t-1}}(2 \beta )}+(1-\phi ) e^{-\theta_1-\theta_2} \left(\frac{\theta_1}{\theta_2}\right)^{{z_{t}}/2} I_{z_{t}}\left(2 \sqrt{\theta_1 \theta_2}\right)},
\end{align*}

\begin{align*}
\frac{\partial l_{n}(\boldsymbol{\omega})}{\partial \theta_2} &= \sum_{t=1}^{n}
\frac{-\frac{\theta_1 {z_{t}} (1-\phi ) e^{-\theta_1-\theta_2} \left(\frac{\theta_1}{\theta_2}\right)^{\frac{{z_{t}}}{2}-1} I_{z_{t}}\left(2 \sqrt{\theta_1 \theta_2}\right)}{2 \theta_2^2}}{\frac{\phi  I_{\delta z_{t}}(2 \beta  p) I_{{z_{t-1}}-{\delta z_{t}}}(2 \beta  (1-p))}{I_{z_{t-1}}(2 \beta )}+(1-\phi ) e^{-\theta_1-\theta_2} \left(\frac{\theta_1}{\theta_2}\right)^{{z_{t}}/2} I_{z_{t}}\left(2 \sqrt{\theta_1 \theta_2}\right)}\\
& \quad \quad +\frac{
(1-\phi ) \left(-e^{-\theta_1-\theta_2}\right) \left(\frac{\theta_1}{\theta_2}\right)^{{z_{t}}/2} I_{z_{t}}\left(2 \sqrt{\theta_1 \theta_2}\right)}{\frac{\phi  I_{\delta z_{t}}(2 \beta  p) I_{{z_{t-1}}-{\delta z_{t}}}(2 \beta  (1-p))}{I_{z_{t-1}}(2 \beta )}+(1-\phi ) e^{-\theta_1-\theta_2} \left(\frac{\theta_1}{\theta_2}\right)^{{z_{t}}/2} I_{z_{t}}\left(2 \sqrt{\theta_1 \theta_2}\right)} \\
& \quad \quad +\frac{\frac{\theta_1 (1-\phi ) e^{-\theta_1-\theta_2} \left(\frac{\theta_1}{\theta_2}\right)^{{z_{t}}/2} \left(I_{{z_{t}}-1}\left(2 \sqrt{\theta_1 \theta_2}\right)+I_{{z_{t}}+1}\left(2 \sqrt{\theta_1 \theta_2}\right)\right)}{2 \sqrt{\theta_1 \theta_2}}}{\frac{\phi  I_{\delta z_{t}}(2 \beta  p) I_{{z_{t-1}}-{\delta z_{t}}}(2 \beta  (1-p))}{I_{z_{t-1}}(2 \beta )}+(1-\phi ) e^{-\theta_1-\theta_2} \left(\frac{\theta_1}{\theta_2}\right)^{{z_{t}}/2} I_{z_{t}}\left(2 \sqrt{\theta_1 \theta_2}\right)}.
\end{align*}
The conditional maximum likelihood estimators (CMLEs) of the unknown parameters $(\phi,p,\beta,\theta_1,\theta_2)^\top$
are obtained numerically by maximizing the conditional log-likelihood
function or by solving the equations related to the score functions. 
\begin{remark}
The relationship between $I_y(\theta)$ and ${ }_0 \tilde{F}_1$ is
\begin{equation}
I_y(\theta)=\left(\frac{\theta}{2}\right)^y{ }_0 \tilde{F}_1\left(; y+1 ; \frac{\theta^2}{4}\right).
\end{equation}
Let $\frac{\theta^2}{4}=m$ and we have
\begin{equation}
{ }_0 \tilde{F}_1(; y+1 ; m)=m^{-y / 2} I_y(2 \sqrt{m}).
\end{equation}
\end{remark}

The following theorem gives the asymptotic properties of CML.

We denote the transition probability $P\left(Z_{t}=z_{t}| Z_{t-1}=z_{t-1}\right)$ as $f(z_{t-1},z_{t},\boldsymbol{\omega}).$
We want to apply the results of Theorem 2.1 of Billingsley and Patrick \cite{billingsley1961statistical} on estimates for the unknown parameters. We just need to verify that the  conditions in Billingsley and Patrick \cite{billingsley1961statistical} are satisfied one by one. Equivalent conditions  are explained in detail in Franke and  Seligmann\cite{franke1993conditional}, and we can also impose.  \\
Let 
$P(m,n) = P(Z_t = n|Z_{t-1} = m) = \phi Z(m,n) + (1-\phi) q(m,n)$,\\
where $Z(m,n) = \frac{I_{\delta n}(2 p \beta) I_{m-\delta n}(2(1-p) \beta)}{I_{m}(2 \beta)}$ and $q(m,n) = e^{-\theta_1-\theta_2}\left(\frac{\theta_1}{\theta_2}\right)^{n / 2} I_{n}\left(2 \sqrt{\theta_1 \theta_2}\right)$.
%
\\ Based on \cite{franke1993conditional}, \cite{miletic2018inar} and \cite{shirozhan2020inar} and we need to verify that the following conditions (C1)-(C7) are met:\\
(C1) For any $z_{t-1}$, the set of $z_{t}$ for which $f(z_{t-1},z_{t},\boldsymbol{\omega}) > 0$ does not depend on $\boldsymbol{\omega}$.\\
(C2) 
$
E\left[\varepsilon_t^3\right]=\sum_{n=0}^{\infty} n^3 q(m, n)<\infty, E\left[Y_t^3\right]=\sum_{n=0}^m n^3 Z(m, n)<\infty
$\\
(C3) For any $z_{t-1}$ and $z_{t}$, $f_{u}(z_{t-1},z_{t},\boldsymbol{\omega})$, $f_{uv}(z_{t-1},z_{t},\boldsymbol{\omega})$ and $f_{uvw}(z_{t-1},z_{t},\boldsymbol{\omega})$ exist and are continuous throughout $\boldsymbol{\Omega}$. 
Then for any $z_{t-1}$,$$g(z_{t-1},z_{t},\boldsymbol{\omega}) = \log f(z_{t-1},z_{t},\boldsymbol{\omega})$$ is almost surely well defined, and 
 $g_{u}(z_{t-1},z_{t},\boldsymbol{\omega})$, $g_{uv}(z_{t-1},z_{t},\boldsymbol{\omega})$ and $g_{uvw}(z_{t-1},z_{t},\boldsymbol{\omega})$ are continuous in $\boldsymbol{\Omega}$.\\
(C4) For any $\boldsymbol{\omega} \in  \boldsymbol{\Omega}$ there exists a neighborhood $N$ of $\boldsymbol{\omega}$ such that for any $u,v,w,z_{t-1}$,
$$
\sum_{z_t \in \boldsymbol{Z}} \sup _{\boldsymbol{\omega}^{\prime} \in N}\left|f_u\left(z_{t-1}, z_t ; \boldsymbol{\omega}^{\prime}\right)\right| \lambda(d z_t)<\infty
$$
$$
\sum_{z_t \in \boldsymbol{Z}} \sup _{\boldsymbol{\omega}^{\prime} \in N}\left|f_{u v}\left(z_{t-1}, z_t ; \boldsymbol{\omega}^{\prime}\right)\right| \lambda(d z_t)<\infty
$$
$$
E_{\boldsymbol{\omega}}[\sup _{\boldsymbol{\omega}^{\prime} \in N} \left|g_{u v w}\left(z_{t-1}, z_t; \boldsymbol{\omega}^{\prime}\right) \right| ]<\infty.
$$
where $\lambda$ is Lebesgue measure.\\
For any $\eta^{'} \in B$; there exists a neighborhood $U$ of  $\eta^{'}$ and for any $\rho^{'} \in  B$; there exists a neighborhood $H$ of $\rho^{'}$ such that:
$$
\sum_{n=-\infty}^{\infty} \sup _{\eta \in U} q(m, n)<\infty, \sum_{n=-\infty}^{\infty} \sup _{\rho \in H} Z(m, n)<\infty ,
$$
and
$$
\sum_{n=-\infty}^{\infty} \sup _{\eta \in U}\left|q_u(m, n)\right|<\infty, \sum_{n=-\infty}^{\infty} \sup _{\eta \in U}\left|q_{u v}(m, n)\right|<\infty.
$$
(C5)  For any $\eta \in B$; there exists a neighborhood $U$ of $\eta^{'}$ and increasing sequences (depending on $\eta^{'}$ and $U$) such that for all non vanishing
$$
\begin{aligned}
\left|q_u(m, n)\right| & \leq \psi_u(m, n) q(m, n) \\
\left|q_{u v}(m, n)\right| & \leq \psi_{u v}(m, n) q(m, n) \\
\left|q_{u v w}(m, n)\right| & \leq \psi_{u v w}(m, n) q(m, n)
\end{aligned}
$$
and with respect to the stationary distribution of the INAR(l) process $\{Z_t\}$
$$
\begin{array}{rr}
E\left[\psi_u^3\left(Z_1, Z_2\right)\right]<\infty, & E\left[Z_1 \psi_{u v}\left(Z_1, Z_2\right)\right]<\infty, \\
E\left[\psi_u\left(Z_1, Z_2\right) \psi_{v w}\left(Z_1, Z_2\right)\right]<\infty, & E\left[\psi_{u v w}\left(Z_1, Z_2\right)\right]<\infty.
\end{array}
$$
(C6) If $(\sigma_{u v }(\boldsymbol{\omega}))_{u,v=1,...5}$ is defined by
$$
\sigma_{u v }(\boldsymbol{\omega}) = E_{\boldsymbol{\omega}}[g_{u}\left(z_{t-1}, z_t; \boldsymbol{\omega}\right)g_{v}\left(z_{t-1}, z_t; \boldsymbol{\omega}\right) ] \quad \quad u,v=1,...5
$$
then the fisher information matrix $\sigma(\boldsymbol{\omega}) = \sigma_{u v }(\boldsymbol{\omega})$ is
nonsingular.\\
(C7) For each $\boldsymbol{\omega} \in \boldsymbol{\Omega}$, the stationary distribution, which by assumption exists is unique, has the property that for each $\xi \in Z$, $p_{\boldsymbol{\omega}}(\xi,\cdot)$ is absolutely continuous with respect to $p_{\boldsymbol{\omega}}(\cdot)$.

\begin{theorem}
Suppose that $[z_{n},L_n(;\boldsymbol{\omega}),\boldsymbol{\Omega}]$ satisfies conditions (C1)-(C7),
that $\boldsymbol{\omega}_0$ is  the true value of the parameter and that $\hat{\boldsymbol{\omega}}_{n}$ is a consistent solution of the maximum-likelihood equations. 
Then as $n\to\infty$,

(a) $\hat{\boldsymbol{\omega}}_{n} \to \boldsymbol{\omega}_{0}$ almost surely,

(b) $\sqrt{n}(\hat{\boldsymbol{\omega}}_{n}-\boldsymbol{\omega}_{0}) \stackrel{d}{\longrightarrow} N\left(0, G^{-1}\right)$, 
where $G=(\sigma_{uv}),\\ 
\sigma_{u v}=-E_{\boldsymbol{\omega}}\left[\frac{\partial l_{n}(\boldsymbol{\omega}) }{\partial \omega_{u} \omega_{v}}\right]=E_{\boldsymbol{\omega}}\left[\frac{\partial l_{n}(\boldsymbol{\omega})}{\partial \omega_{u}} \cdot \frac{\partial l_{n}(\boldsymbol{\omega}) }{\partial \omega_{v}}\right],1\leq u,v \leq 5$.

\end{theorem}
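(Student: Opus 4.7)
The plan is to apply Theorem 2.1 of Billingsley and Patrick \cite{billingsley1961statistical} directly: once the seven regularity conditions (C1)--(C7) listed above are verified for the transition density $f(z_{t-1},z_t,\boldsymbol{\omega})$, both the almost-sure convergence in (a) and the asymptotic normality in (b) with covariance $G^{-1}$ follow from their general result for maximum-likelihood estimation in stationary Markov chains. The strategy therefore reduces to checking each condition in turn, using the explicit form of the mixture density and the decomposition $P(m,n)=\phi Z(m,n)+(1-\phi)q(m,n)$ already introduced.

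I would dispose of the structural conditions first. Condition (C1) is immediate: the support of $f(z_{t-1},\cdot;\boldsymbol{\omega})$ is all of $\mathbf{Z}$, independent of $\boldsymbol{\omega}$, since both $Z(m,n)$ and $q(m,n)$ are strictly positive for every integer pair $(m,n)$ and $0<\phi<1$. Condition (C7) follows because the Skellam innovation assigns positive mass to every integer, so the one-step kernel dominates the counting measure and the stationary law is equivalent to it. Condition (C3) follows from the fact that the modified Bessel functions $I_y(\cdot)$ are real-analytic in their argument on $(0,\infty)$ and the mixture density is a smooth function of $(\phi,p,\beta,\theta_1,\theta_2)$ in the open parameter set $\boldsymbol{\Omega}$, so $f$ and $g=\log f$ admit continuous partial derivatives up to order three.

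The bulk of the work lies in the integrability/moment conditions (C2), (C4) and (C5). For (C2), the third moment of the Skellam innovation $\varepsilon_t$ is finite in closed form, and the conditional thinning variable $\sum Y_i+\sum B_i$ is bounded in third moment by standard Bessel-distribution arguments. For (C4) and (C5) I would exploit the recursion $I_y'(x)=\tfrac{1}{2}(I_{y-1}(x)+I_{y+1}(x))$ to write each partial derivative of $Z(m,n)$ and $q(m,n)$ as a rational combination of Bessel functions, then use the bounds $I_{y+1}(x)/I_y(x)\le 1$ for $y\ge 0$ together with the asymptotics $I_y(x)\sim e^x/\sqrt{2\pi x}$ to construct the dominating sequences $\psi_u,\psi_{uv},\psi_{uvw}$. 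The moment conditions $E[\psi_u^3(Z_1,Z_2)]<\infty$ etc.\ then follow from the existence of all moments of $Z_t$ in the strictly stationary regime $|\phi p\delta|<1$ established in Section~2. Uniform summability in a neighborhood $N$ of $\boldsymbol{\omega}$ is handled by taking $N$ with compact closure inside $\boldsymbol{\Omega}$ and bounding the suprema by evaluating the dominating expressions at the extreme points of $N$.

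The main obstacle will be condition (C6), the nonsingularity of the Fisher information matrix $\sigma(\boldsymbol{\omega})$. Because $f$ is a genuine two-component mixture in $\phi$, the score vectors $(g_\phi,g_p,g_\beta,g_{\theta_1},g_{\theta_2})$ can be shown to be linearly independent as functions of $(z_{t-1},z_t)$: a nontrivial linear combination would force an identity between the extended-binomial kernel $Z(m,n)$ and the Skellam kernel $q(m,n)$ across all integer pairs, which is ruled out by the distinct tail behavior of the two families in $n$ for fixed $m$. Once this linear independence is established, positivity of the quadratic form $c^\top\sigma(\boldsymbol{\omega})c=E_{\boldsymbol{\omega}}[(c^\top\nabla g)^2]$ for $c\neq 0$ yields invertibility. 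With (C1)--(C7) in hand, invoking Billingsley--Patrick gives $\hat{\boldsymbol{\omega}}_n\to\boldsymbol{\omega}_0$ almost surely and $\sqrt{n}(\hat{\boldsymbol{\omega}}_n-\boldsymbol{\omega}_0)\stackrel{d}{\to}N(0,G^{-1})$, and the identification of $G$ with the negative expected Hessian (equivalently the outer product of scores) is the standard information identity that follows from differentiating $\sum_{z_t}f(z_{t-1},z_t;\boldsymbol{\omega})=1$ twice and applying (C4) to interchange sum and derivative.
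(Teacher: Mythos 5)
Your proposal takes essentially the same route as the paper: both reduce the theorem to Theorem 2.1 of Billingsley by verifying the regularity conditions (C1)--(C7) for the mixture transition kernel $P(m,n)=\phi Z(m,n)+(1-\phi)q(m,n)$, using the recursion $I_y'(x)=\tfrac{1}{2}(I_{y-1}(x)+I_{y+1}(x))$ and the summation identities for modified Bessel functions to control the derivatives and their sums. The only notable differences are matters of detail: you sketch an actual linear-independence argument for the nonsingularity of the Fisher information in (C6), where the paper merely asserts it, and your ratio bound $I_{y+1}(x)/I_y(x)\le 1$ streamlines the summability estimates that the paper handles by splitting the index set according to whether $|I_y(\theta)|\ge 1$.
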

\begin{proof}
See Appendix A for details.
\end{proof}

\subsection{Yule-Walker Estimation}
To apply the modified Yule-Walker method applied by Mileti{\'c} Ili{\'c} et al. \cite{miletic2018inar} and  Shirozhan and  Mohammadpour\cite{shirozhan2020inar}, we have estimation process as follows. The following set of equations are for estimating $\phi$, $p$, $\theta$, $\theta_1$ and $\theta_2$:
\begin{equation}
\hat{\phi p}=\hat{\delta} r_1=\hat{\delta} \frac{\sum_{t=1}^{n}\left(z_t-\bar{z}\right)\left(z_{t+1}-\bar{z}\right)}{\sum_{t=1}^{n+1}\left(z_t-\bar{z}\right)^2}, \quad
S_{z}^{2} = \frac{1}{n} \sum_{t=1}^{n+1}\left(z_t-\bar{z}\right)^2,
\end{equation}
where  $\bar{z}$ is the sample mean,  ${S_{z}^{2}}$ is the sample variance, and $r_{1}$ is the sample autocorrelation.
\begin{equation}
E[Z_{t}]=\frac{(1-\phi)(\theta_1-\theta_2)}{(1-\phi  p \hat{\delta})}=\bar{z}.
\end{equation}
The unconditional variance is 
\begin{align}
 \notag Var[Z_{t}] &=\frac{p(1-p)\phi(1-\phi)(\theta_{1}-\theta_{2})}{(1-\phi^2 p^2-(1-\phi )p^2)(1-\phi p \delta)} 
 - \frac{2\phi(1-\phi)^3p\delta(\theta_{1}-\theta_{2})^2}{(1-\phi^2 p^2-(1-\phi )p^2)(1-\phi p \delta)} \\
\notag & \quad -\frac{(1-\phi)^{2}(\theta_{1}-\theta_{2})^{2}}{(1-\phi p \delta)^{2}}
 +\frac{2\phi p(1-p)\theta}{(1-\phi^2 p^{2}-(1-\phi)p^2)} E_{Z_{t-1}}[\frac{_{0} \widetilde{F}_{1}\left(;Z_{t-1}+2 ; \theta\right)}{_{0} \widetilde{F}_{1}\left(;Z_{t-1}+1 ; \theta\right)}] \\
 & \quad +\frac{(1-\phi)(\theta_{1}+\theta_{2}+(2-\phi)(\theta_{1}
 -\theta_{2})^{2})}{(1-\phi^2 p^{2}-(1-\phi)p^2)},    
\end{align}
%
where we substitute $p$ and $\theta$ with  $\hat{p}_{ml}$ and  $\hat{\theta}_{ml}$, i.e., the ML estimators of the parameters $p$ and $\theta$, as the initial estimate of the parameters. And we use the following equation:
\begin{equation}
E_{Z_{t-1}}[\frac{_{0} \widetilde{F}_{1}\left(;Z_{t-1}+2 ; \theta\right)}{_{0} \widetilde{F}_{1}\left(;Z_{t-1}+1 ; \theta\right)}] = 
\frac{1}{n}\sum_{t=2}^{n}\frac{_{0} \widetilde{F}_{1}\left(;Z_{t-1}+2 ; \theta\right)}{_{0} \widetilde{F}_{1}\left(;Z_{t-1}+1 ; \theta\right)}. 
\end{equation}
then we have three equations to find solutions for unknown parameters $\phi$, $\theta_1$ and $\theta_2$, that is, we can apply modified Yule-Walker method.
\section{Simulation}
In this section,  we first outline the process of generating data of the model, then give the parameter settings for five groups, and show the performance of CML and Yule-Walker method.
\subsection{Steps of generation of the model}
All steps of generation of the model are as follows:
\begin{enumerate}
	\item Generate $n$ independent observations from $PD(\theta_{1},\theta_{2})$  which serves as $\left\{\varepsilon_{t}\right\}_{t=1}^{n}$.
	\item Generate $n$ independent observations from $binomial(1,\phi)$ which serves as $\left\{V_t\right\}_{t=1}^n$.
	\item Choose an initial value from $PD(\theta_{1},\theta_{2})$ for  $z_{1}$.
	\item Generate an independent observation from $EB(z_{1},\alpha,\theta)$ and call it $S_{p,\theta}(z_{1}).$ 
	\item Let $z_{2}=V_2 S_{p,\theta}(z_{1})+\left(1-V_2\right)\varepsilon_{2}.$
	\item Generate an independent observation from $EB(z_{2},\alpha,\theta)$  and call it $S_{p,\theta}(z_{2})$ .
	\item Let $z_{3}=V_3S_{p,\theta}(z_{2})+\left(1-V_3\right)\varepsilon_{3}$.
	\item Repeat steps 5 and 6 until we obtain the $n^{th}$ observation.
\end{enumerate}

\subsection{Parameter Settings}
To report the performances of the CML method and Yule-Walker method, we conduct simulation studies under the following settings, and note the four groups of parameters by $\boldsymbol{\omega}_1$, $\boldsymbol{\omega}_2$, $\boldsymbol{\omega}_3$, $\boldsymbol{\omega}_4$. Both the case the sequence has positive correlation $(\delta = 1)$ and negative correlation $(\delta =- 1)$ are considered.
\begin{enumerate}
 \item $(\phi, p,\beta,\theta_{1},\theta_{2})^\top = (0.8,0.5,\sqrt{5},10,10)^\top = \boldsymbol{\omega}_1$, $\delta = 1$,
 \item $(\phi, p,\beta,\theta_{1},\theta_{2})^\top = (0.2, 0.4, 2, 9, 7)^\top = \boldsymbol{\omega}_2$, $\delta = 1$,
\item $(\phi, p,\beta,\theta_{1},\theta_{2})^\top = (0.2,0.4,\sqrt{5},5,5)^\top = \boldsymbol{\omega}_3$, $\delta = -1$, 
\item $(\phi, p,\beta,\theta_{1},\theta_{2})^\top = (0.2,0.8,\sqrt{5},10,10)^\top = \boldsymbol{\omega}_4$, $\delta = -1$,
\end{enumerate}
\begin{figure} 
\subfigure[Sample path of $\boldsymbol{\omega}_1$ .] {
 \label{fig:a}     
\includegraphics[width=0.5\columnwidth,height = 9cm]{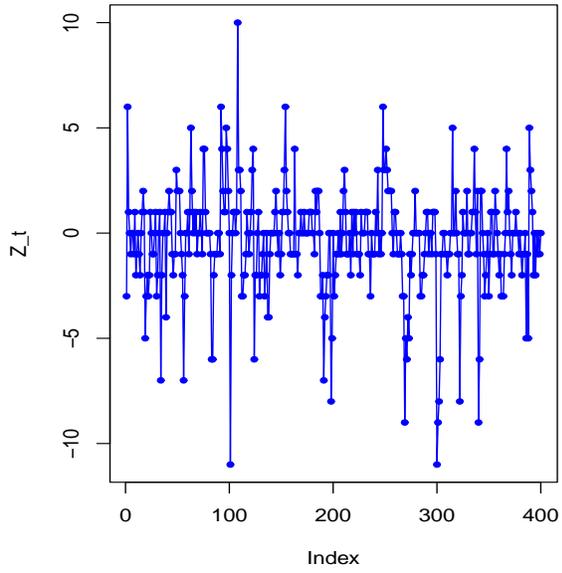} 
}     
\subfigure[Sample path of $\boldsymbol{\omega}_2$.] { 
\label{fig:b}     
\includegraphics[width=0.5\columnwidth,height = 9cm]{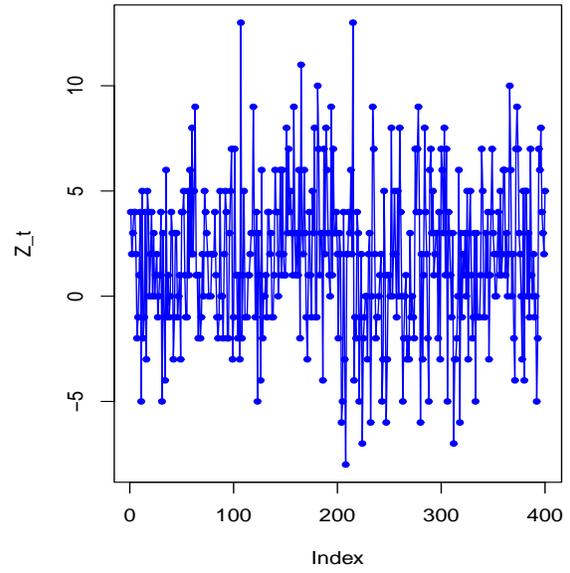}     
}    \\
\subfigure[Sample path of $\boldsymbol{\omega}_3$.] { 
\label{fig:c}     
\includegraphics[width=0.5\columnwidth,height = 9cm]{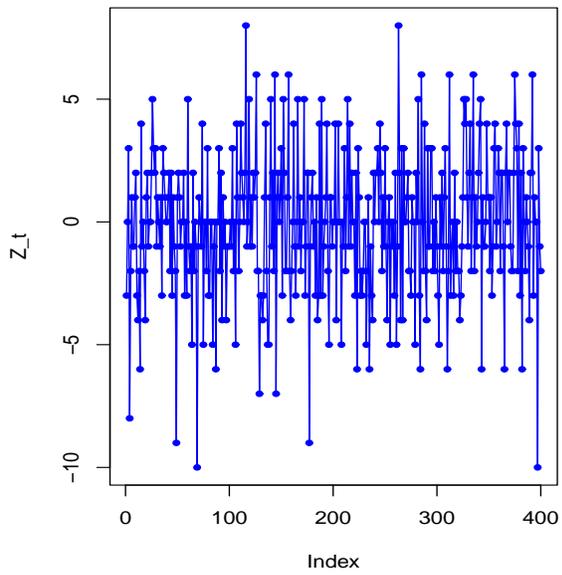}     
}    
\subfigure[Sample path of $\boldsymbol{\omega}_4$.] { 
\label{fig:d}     
\includegraphics[width=0.5\columnwidth,height = 9cm]{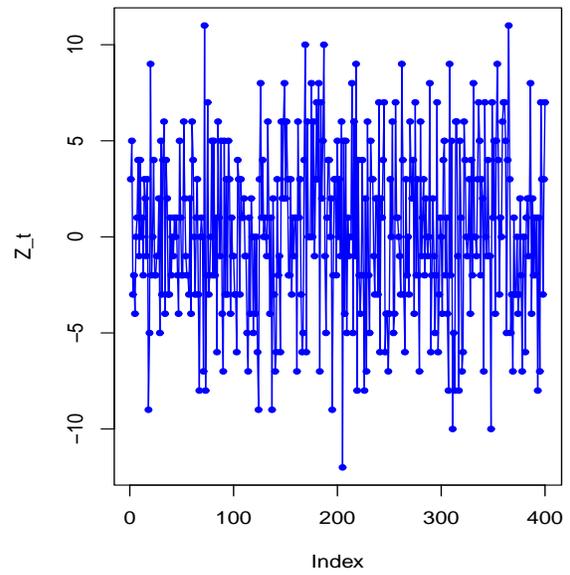}     
}   
\caption{ Sample paths of the four parameter groups. }     
\label{fig1}    
\end{figure}
\subsection{Some Numerical Results}
\subsubsection{Results for CML and Yule-Walker}
The performance of the estimators is checked by a small Monte Carlo simulation using different sample sizes ($n$ = 200, 400, 4000). All the simulations in this study are performed
under the $R$ software based on $N = 100$ replications.
Table 2 gives mean and MSE for the conditional maximum likelihood and Yule-Walker method. Based on the table, we find that the estimates are convergent to their values. Within each case, parameter $\phi$ takes small value 0.2, and large value 0.8, 0.9. Moreover, we choose the case $\theta_1 = \theta_2$ and the case $\theta_1 \not= \theta_2$. Since the value of $\delta$ can be $1$ or $-1$, we also consider the estimation of the negative correlation cases. Table 1 and Table 2 show that obtained estimaters  are convergent to their values in all the cases. Also, increasing the sample size implies smaller MSE. Moreover, we can notice that Yule-Walker estimator gives larger MSE in almost every case. All of the above, we can 
conclude that in most cases, the CMLEs provide good performance which was expected with respect to all parameters. 
\begin{table}[htbp] 
\centering
\label{T2}\caption{Descriptive statistics of daily new cases in Barbados with lag one difference.} \vspace{0.1in}
\begin{tabular}{cccccccc}
\hline
Variable          & no. & Mean       & Variance & Minimum & Median & Maximum & Range \\ \hline
Barbados. cases          & 292              & 1.3527  &5.6037   &0 & 0 &16 &16                     \\ 
Barbados. difference   & 291               &-0.0068   &8.4879 &-14 & 0&  14&  28    \\ \hline

\end{tabular}
\end{table}
\begingroup
\setlength{\tabcolsep}{5pt} 
\renewcommand{\arraystretch}{0.6} 
\begin{table}[htbp]
\label{T2}\caption{Mean value and mean square error (MSE) are reported for $mymodel^{+}$ and $mymodel^{-}$ through CML and Yule-walker method for $(\phi,p,\beta,\theta_1,\theta_2)^\top$
and with $n=200,400$, $800$ and $4000$.} \vspace{0.1in}
\centering
\begin{tabular}{cccccccccc}
\hline
$N $     & $\phi_{ml} $    & $p_{ml}$     & $\beta_{ml}$     & $\theta_{1ml}$ & $\theta_{2ml}$ & $\phi_{yw}$   & $\theta_{1yw}$ & $\theta_{2yw}$ & $\theta_{1yw}-\theta_{2yw}$  \\  \hline
\multicolumn{10}{c}{(i) True values $ \boldsymbol{\omega}_1= (0.8, 0.5, \sqrt{5}, 10, 10)^\top, \delta = 1$}                                                   \\
200  & 0.8004 & 0.5009 & 2.2356  & 10.5380  & 10.3026  & 0.7908 & 15.3333   & 15.3913   & -0.0580           \\
MSE  & 0.0019 & 0.0024 & 0.1454  & 9.7403   & 9.7399   & 0.0185 & 332.3275  & 325.2845  & 1.5563            \\
400  & 0.7952 & 0.4998 & 2.2161  & 10.1775  & 10.1759  & 0.7936 & 19.0681   & 19.2959   & -0.2278           \\
MSE  & 0.0014 & 0.0010 & 0.0662  & 4.0363   & 4.2060   & 0.0104 & 4347.6635 & 4729.4396 & 9.3705            \\
800  & 0.7962 & 0.5006 & 2.2319  & 10.0079  & 10.0324  & 0.7893 & 12.0170   & 12.0429   & -0.0258           \\
MSE  & 0.0004 & 0.0006 & 0.0306  & 2.2122   & 1.9988   & 0.0050 & 32.7616   & 33.0771   & 0.2485            \\
4000 & 0.8002 & 0.4996 & 2.2441  & 10.0126  & 10.0012  & 0.7938 & 11.0958   & 11.0919   & 0.0039            \\
MSE  & 0.0001 & 0.0001 & 0.0079  & 0.3718   & 0.3577   & 0.0013 & 6.2700    & 6.2717    & 0.0288            \\
\multicolumn{10}{c}{(ii) True values $\boldsymbol{\omega}_2 = (0.2, 0.4, 2, 9, 7)^\top, \delta = 1$}                                                                          \\
200  & 0.2304 & 0.4192 & 1.9630  & 9.1064   & 7.0075   & 0.1872 & 9.2269    & 7.1646    & 2.0622            \\
MSE  & 0.0063 & 0.0128 & 3.7185  & 0.9755   & 0.9086   & 0.0397 & 8.6072    & 6.1845    & 0.2958            \\
400  & 0.2125 & 0.3956 & 1.7688  & 9.0346   & 7.0050   & 0.2091 & 8.9010    & 6.8413    & 2.0597            \\
MSE  & 0.0026 & 0.0044 & 2.5326  & 0.5209   & 0.4666   & 0.0152 & 1.8819    & 1.3431    & 0.1247            \\
800  & 0.2043 & 0.4060 & 1.4135  & 9.0057   & 6.9987   & 0.2061 & 8.7288    & 6.6983    & 2.0304            \\
MSE  & 0.0011 & 0.0027 & 0.3711  & 0.2784   & 0.2365   & 0.0112 & 1.2221    & 0.9268    & 0.0535            \\
4000 & 0.1995 & 0.3994 & 1.4436  & 9.0136   & 7.0057   & 0.2058 & 8.5688    & 6.5613    & 2.0075            \\
MSE  & 0.0002 & 0.0004 & 0.0836  & 0.0700   & 0.0603   & 0.0020 & 0.3844    & 0.3418    & 0.0110            \\
\multicolumn{10}{c}{(iv) True values $\boldsymbol{\omega}_3 = (0.2, 0.4, \sqrt{5}, 5, 5)^\top, \delta = -1$}                                                    \\
200  & 0.2354 & 0.4093 & 3.6901  & 5.2991   & 5.2297   & 0.1782  & 4.9700           &  5.0335         &   -0.0635                \\
MSE  & 0.0196 & 0.0326 & 26.8456 & 2.0729   & 1.9893   & 0.0372       &          2.5206 & 2.6660          &   0.0768                \\
400  & 0.2127 & 0.4028 & 2.7968  & 5.0779   & 5.0616   & 0.1893       &         4.9007  &4.8959           & 0.0047                  \\
MSE  & 0.0091 & 0.0120 & 7.9427  & 0.3761   & 0.3408   & 0.0191       &         0.8230  &   0.8166        &   0.0405                \\
800  & 0.2227 & 0.3805 & 2.8852  & 5.1111   & 5.0912   & 0.1979       &          4.8497 & 4.8575          &    -0.0078               \\
MSE  & 0.0067 & 0.0058 & 4.9466  & 0.2222   & 0.2192   & 0.0078       &          0.3244 &  0.3160         &   0.0175                \\
4000 & 0.2026 & 0.4002 & 2.3019  & 4.9971   & 5.0057   &  0.2021      &          4.8513 &   4.8653        &  -0.0140                 \\
MSE  & 0.0005 & 0.0009 & 0.2648  & 0.0254   & 0.0258   &  0.0020      &          0.0894 &  0.0881         &    0.0039               \\
\multicolumn{10}{c}{(v) True values $\boldsymbol{\omega}_4 = (0.2, 0.8, \sqrt{5}, 10, 10)^\top, \delta = -1$}                                                  \\
200  & 0.2150 & 0.7982 & 2.7313  & 9.9927   & 10.0072  & 0.1984       &          9.8042 & 9.7416           &    0.0626               \\
MSE  & 0.0047 & 0.0062 & 7.8927  & 1.5760   & 1.4650   &  0.0088      &          1.7339 &  1.8601         &  0.1320                 \\
400  & 0.2067 & 0.7956 & 2.7468  & 9.9741   & 9.9550   &  0.2014      &          9.8563 &  9.8964         &    -0.0400               \\
MSE  & 0.0012 & 0.0027 & 3.0378  & 0.5707   & 0.6029   &  0.0045      &          0.9554 &  0.9539         &   0.0788                \\
800  & 0.2028 & 0.8006 & 2.3352  & 9.9443   & 9.9523   &  0.2010      &          9,8914 &   9.8812        &  0.0103                 \\
MSE  & 0.0008 & 0.0011 & 0.7012  & 0.3170   & 0.3108   &  0.0024      &          0.4759 &   0.4933        &  0.0391                 \\
4000 & 0.2018 & 0.8007 & 2.3115  & 10.0136  & 10.0214  & 0.2007       &          9.8885 &   9.8877        &   0.0008                \\
MSE  & 0.0001 & 0.0002 & 0.1202  & 0.0654   & 0.0629   & 0.0005       &          0.1108 &  0.1094         &  0.0098                 \\ \hline
\end{tabular}
\end{table}
\section{Real Data Example}

\begin{table}[htbp]
\centering
\label{T3}\caption{Estimated parameters through several models, AIC, BIC for the new cases in barbodos from to} \vspace{0.1in}
\begin{tabular}{ccccccccc}
\hline

Model& CML estimates &AIC & BIC & HQIC &Log-likelihood\\
\hline

$\text{PDINAR}^{-}(1)$
&$\hat{\alpha} = -0.4187   $
&1365.8327 &1376.8629 &1365.0418 &-679.9163 \\

&$\hat{\theta}_1 = 1.7735 $
&&&&&\\

&$\hat{\theta}_2 = 1.8309$
&&&&&\\

$\text{MESINAR}^{-}(1)$

&$\hat{\phi} = 0.5680$
&1042.2406 &1060.6243 &1040.9225 &-516.1203   \\

&$\hat{p} = -0.0055$
&&&&& \\

&$\hat{\beta} = 17.1567$
&&&&&\\

&$\hat{\theta}_1 = 3.8991$
&&&&&\\

&$\hat{\theta}_2 = 1.0149$
&&&&&\\

$\text{MSINARS}^{-}(1)$

&$\hat{\phi} = 0.4752$
&1229.3335 &1244.0405 &1228.2790 &-610.6668 \\

&$\hat{\alpha} = -0.6932$
&&&&&\\

&$\hat{\theta}_1 = 5.3686$
&&&&&\\

&$\hat{\theta}_2 = 4.7866$
&&&&&\\

\hline
\end{tabular}
\end{table}
In this section,we apply our model to a real count data obtained from the website \cite{TP-toolbox-web}. The series is about daily new cases in Barbados. We aim to study the change of incrememental of the data.
The series represents the daily new cases of in Barbados from March, 17th 2020 to January, 2nd 2021.

The sample paths, the difference of the sample paths, autocorrelation functions (ACFs) and partial autocorrelation functions (PACFs) of two series are displayed in Figure 2.
The figures suggest that the first order autoregressice models are appropriate for analysing the given data series.

Some   series after differencing include negative values
thus, the advantage of the model we proposed is to fit integer valued time series with possible negative values 
and either positive or negative correlation.

\begin{figure}  
\subfigure[Sample path of daily new cases.] {
 \label{fig:e}     
\includegraphics[width=0.5\columnwidth,height = 9cm]{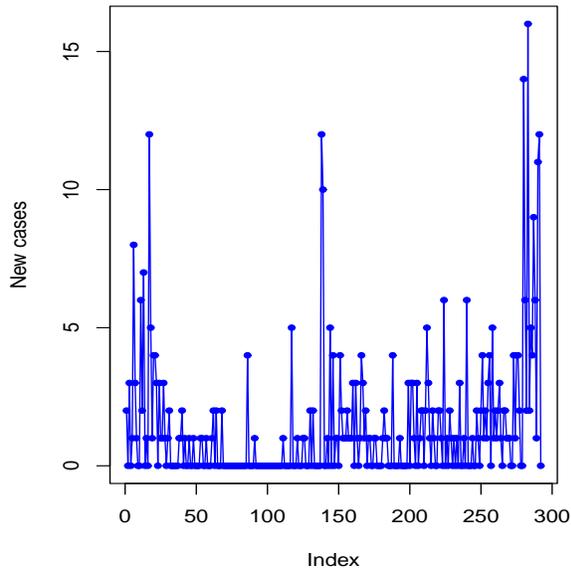}  
} \vspace{-0.1in}
\subfigure[Sample path of differenced daily new cases.] { 
\label{fig:f}     
\includegraphics[width=0.5\columnwidth,height = 9cm]{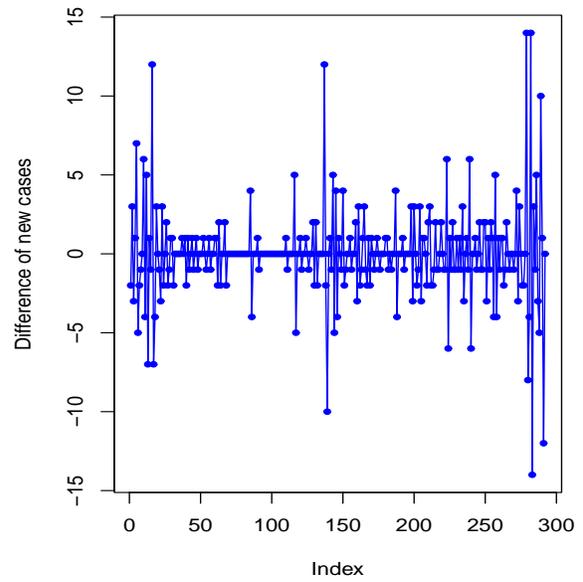}     
}    \\ 
\subfigure[ACF of differenced daily new cases.] { 
\label{fig:g}     
\includegraphics[width=0.5\columnwidth,height = 9cm]{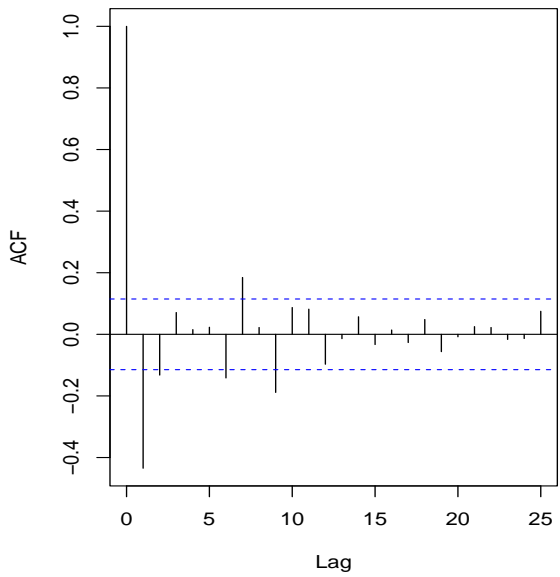}     
}    
\subfigure[PACF of differenced daily new cases.] { 
\label{fig:h}     
\includegraphics[width=0.5\columnwidth,height = 9cm]{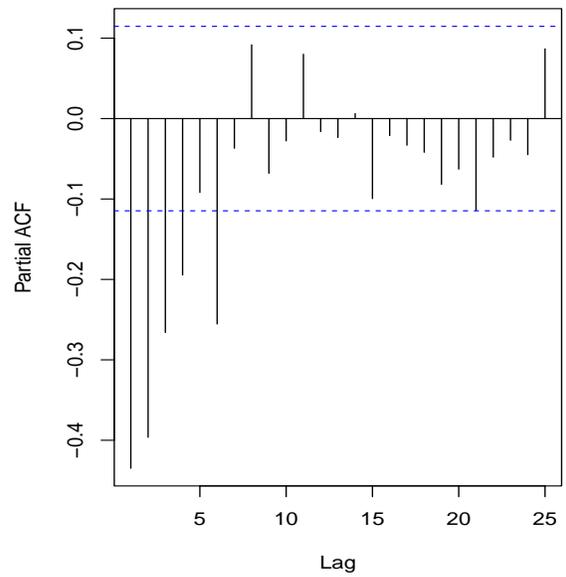}     
}   
\caption{  Sample paths, ACF and PACF of differenced daily new cases. }     
\label{fig3}    
\end{figure}

The results are given in Table 1 and 3.
Table 1 shows descriptive statistics of differenced daily new cases in Barbados with their lag one difference. According to the ACF of differenced daily new cases, the autocorrelation is negative, i.e., $\hat{\delta} = -1$. For the purpose of testing our model compared with other simple distributions or  INAR(1) models, which can analyse positive and negative data, we apply some goodness-of-fit statistics criteria: AIC, BIC, HQIC and log-likelihood. For the purpose of testing our model against some other relevant INAR(1) models, such as $\text{PDINAR}(1)$ and the process $Z_t = U_t\left(\alpha \odot Z_{t-1}\right) +(1-U_t) \varepsilon_t$, where $U_t$ is a sequence of independent and identically Bernoulli random variables with parameter $\phi$, $ "\odot" $ is the operator proposed by Hee-young and Yousung\cite{kim2008non}, and $\left\{ \varepsilon_t\right\}$ is a sequence of independent and identically Poisson difference  distributed random variables with parameters $\theta_1,\theta_2$.

Table 3 presents  some goodness-of-fit statistics for the series. Better model is characterized by smaller values of these statistics.
As it can be seen from these tables, the five statistics  are small for the model we proposed. And the difference $\hat{\theta}_1-\hat{\theta}_2 = 2.8842$ shows that the innovation of original series  is possibly not a single unchanged Poisson distribution.  
Therefore, we can conclude that the  model works well for the real data series.
\section{Conclusion and Prospect}
In this paper, we construct a model based on extended binomial distribution and Skellam distribution with the Pegram's operator dealing with positive  or negative autocorrelation data defined on $\mathbf{Z}$. The best method of all is CML method  via numerical simulation and it performs well on the number of differenced daily new cases in Barbados. 

Next we will extend our model naturally to a SINARS regression model:
$$
\begin{aligned}
Z_t & =(\phi,\delta S_{\alpha,\theta}(Z_{t-1}))*(1-\phi,R_t), t=1,2, \ldots \\
R_t & \sim \operatorname{Skellam}\left(\theta_{1t}, \theta_2\right) \\
\theta_{1t} & =\beta_0+\beta_1 X_{1 t}+\ldots+\beta_k X_{k t}
\end{aligned}
$$
where $X_{it}$ are covariates and the $\beta$’s the associated coefficients.

Another extension is that we consider a mixture of  Bivariate Skellam distribution \cite{omair2022bivariate} and Bivariate extended binomial distribution to deal with binary data.

{\bf Acknowledgments.~}
Our research was supported by Jilin University and Liaoning University.

\nocite{*} 
\bibliographystyle{plain}
\bibliography{ref}
\appendix
\section{Appendix}
\subsection{Proof of theorem 1}
The proof of the theorem follows Theorem 2.1 of Billingsley\cite{billingsley1961statistical}.
We show the conditions(C1)-(C6) are satisfied for the model.
We first give one lemma, which plays a key role in the proofs of other lemmas.
\begin{enumerate}[i.]
	\item $P(m,n)$ is three times continuously differentiable with respect to $\phi,p,\theta,\theta_1,\theta_2$, and for any $m$,$\{n;P(m,n)>0\}$ does not depend on $\omega$. Therefore, $\log P(m,n)$ is well-defined except on a set of $P(m,\cdot)$-measure 0 which does not depend on the parameter values. Thus, conditions (C1) and (C2) are satisfied for the model.
	\item %
Recall the one step transition probability is
\begin{align*}
\notag \pi(z_{t}|z_{t-1}) & = P(Z_{t}=z_{t}|Z_{t-1}=z_{t-1})\\ \notag
& =\phi P(S_{\alpha,\theta}(Z_{t-1})=\delta z_{t})+(1-\phi)P(\varepsilon_{t}=z_{t}) \\ 
& = \phi \frac{I_{\delta z_t}(2 p \beta) I_{z_{t-1}-\delta z_t}(2(1-p) \beta)}{I_{z_{t-1}}(2 \beta)}
+(1-\phi)e^{-\theta_1-\theta_2}\left(\frac{\theta_1}{\theta_2}\right)^{z_t / 2} I_{z_{t}}\left(2 \sqrt{\theta_1 \theta_2}\right)
\end{align*}

Using the derivative formula of modified Bessel function of the first kind:
\begin{align*}
\frac{\partial I_y(\theta)}{\partial \theta}=\frac{y}{\theta} I_y(\theta)+I_{y+1}(\theta).
\end{align*}
Without loss of generality, and in order to simplify the problem, we consider the case $\delta = 1$ (The case $\delta = -1$ is similarly obtained). \\
Let 
$P(m,n) = P(Z_t = n|Z_{t-1} = m) = \phi Z(m,n) + (1-\phi) q(m,n)$,\\
where 
$$
Z(m,n) = \frac{I_{n}(2 p \beta) I_{m-n}(2(1-p) \beta)}{I_{m}(2 \beta)},
$$ 
and 
$$
q(m,n) = e^{-\theta_1-\theta_2}\left(\frac{\theta_1}{\theta_2}\right)^{n / 2} I_{n}\left(2 \sqrt{\theta_1 \theta_2}\right).
$$
Thus  $$ P(m,n) \leq Z(m,n) + q(m,n).$$
And for the first  order derivative of transition probability, we have
$$
\sum_{n = -\infty}^{\infty}\left|\frac{\partial  P(m, n)}{\partial \phi}\right| = \sum_{n = -\infty}^{\infty}\left|Z(m, n)-q(m, n) \right| \leq \sum_{n = -\infty}^{\infty}\left|Z(m, n)\right| + \sum_{n = -\infty}^{\infty}\left|q(m, n)\right| = 2 < \infty ,
$$
the second derivative with respect to $\phi$ is equal to zero.
And the derivative can be simplied as follows:
$$
\left|\frac{\partial P(m, n)}{\partial p}\right| = \left|\frac{\partial Z(m, n)}{\partial p}\right|
,
\left|\frac{\partial  P(m, n)}{\partial \beta}\right| = \left|\frac{\partial  Z(m, n)}{\partial \beta}\right|,
$$
$$
\left|\frac{\partial  P(m, n)}{\partial \theta_1}\right| = \left|\frac{\partial  q(m, n)}{\partial \theta_1}\right|
,
\left|\frac{\partial  P(m, n)}{\partial \theta_2}\right| = \left|\frac{\partial  q(m, n)}{\partial \theta_2}\right|.
$$

Further we have
\begin{align*}
\frac{\partial Z(m,n)}{\partial p} &=-\frac{\beta  \phi  I_n(2 p \beta ) (I_{m-n-1}(2 (1-p) \beta )+I_{m-n+1}(2 (1-p) \beta ))}{I_m(2 \beta )}
 \\
& \quad +\frac{\beta  \phi  (I_{n-1}(2 p \beta )+I_{n+1}(2 p \beta )) I_{m-n}(2 (1-p) \beta )}{I_m(2 \beta )},
\\
\frac{\partial Z(m,n)}{\partial \beta} &= -\frac{\phi  (I_{m-1}(2 \beta )+I_{m+1}(2 \beta )) I_n(2 p \beta ) I_{m-n}(2 (1-p) \beta )}{I_m(2 \beta ){}^2} \\
& \quad +(1-p) \phi  \frac{I_n(2 p \beta ) (I_{m-n-1}(2 (1-p) \beta )+I_{m-n+1}(2 (1-p) \beta ))}{I_m(2 \beta )}  \\
& \quad +\frac{p \phi  (I_{n-1}(2 p \beta )+I_{n+1}(2 p \beta )) I_{m-n}(2 (1-p) \beta )}{I_m(2 \beta )},
\\
\frac{\partial^2 Z(m,n)}{\partial p^2} &= \phi( -\beta  I_n(2 p \beta ) (-\beta  (I_{m-n-2}(2 (1-p) \beta )+I_{m-n}(2 (1-p) \beta ))\\
& \quad -\beta  (I_{m-n}(2 (1-p) \beta )+I_{m-n+2}(2 (1-p) \beta ))) \\
& \quad -2 \beta ^2 (I_{n-1}(2 p \beta )+I_{n+1}(2 p \beta )) (I_{m-n-1}(2 (1-p) \beta )+I_{m-n+1}(2 (1-p) \beta ))\\
& \quad +\beta  (\beta  (I_{n-2}(2 p \beta )+I_n(2 p \beta ))+\beta  (I_n(2 p \beta )+I_{n+2}(2 p \beta ))) I_{m-n}(2 (1-p) \beta ) )
,
\\
\frac{\partial^2 Z(m,n)}{\partial \beta^2} &= \phi  \left(\frac{2 (I_{m-1}(2 \beta )+I_{m+1}(2 \beta )){}^2}{I_m(2 \beta ){}^3}-\frac{I_{m-2}(2 \beta )+2 I_m(2 \beta )+I_{m+2}(2 \beta )}{I_m(2 \beta ){}^2}\right) \\
& \quad \cdot I_n(2 p \beta ) I_{m-n}(2 (1-p) \beta ) -\frac{2 \phi  (I_{m-1}(2 \beta )+I_{m+1}(2 \beta ))}{I_m(2 \beta ){}^2} \\
& \quad \cdot(1-p) I_n(2 p \beta ) (I_{m-n-1}(2 (1-p) \beta )+I_{m-n+1}(2 (1-p) \beta )) \\
& \quad +p (I_{n-1}(2 p \beta )+I_{n+1}(2 p \beta )) I_{m-n}(2 (1-p) \beta )\\
& \quad (1-p) ((1-p) (I_{m-n-2}(2 (1-p) \beta )+I_{m-n}(2 (1-p) \beta ))\\
& \quad +(1-p) (I_{m-n}(2 (1-p) \beta )+I_{m-n+2}(2 (1-p) \beta ))), \\
\frac{\partial^2 Z(m,n)}{\partial p\partial \beta} &= -\frac{\phi  I_n(2 p \beta ) (I_{m-n-1}(2 (1-p) \beta )+I_{m-n+1}(2 (1-p) \beta ))}{I_m(2 \beta )} \\
& \quad +\frac{\beta  \phi  (I_{m-1}(2 \beta )+I_{m+1}(2 \beta )) I_n(2 p \beta ) (I_{m-n-1}(2 (1-p) \beta )+I_{m-n+1}(2 (1-p) \beta ))}{I_m(2 \beta ){}^2}\\ 
& \quad -\frac{\beta  \phi  I_n(2 p \beta ) ((1-p) (I_{m-n-2}(2 (1-p) \beta )+I_{m-n}(2 (1-p) \beta ))}{I_m(2 \beta )} \\ 
& \quad +\frac{(1-p) (I_{m-n}(2 (1-p) \beta )+I_{m-n+2}(2 (1-p) \beta )))}{I_m(2 \beta )} \\
& \quad +\frac{\phi  (I_{n-1}(2 p \beta )+I_{n+1}(2 p \beta )) I_{m-n}(2 (1-p) \beta )}{I_m(2 \beta )} \\
& \quad -\frac{\beta  \phi  (I_{m-1}(2 \beta )+I_{m+1}(2 \beta )) (I_{n-1}(2 p \beta )+I_{n+1}(2 p \beta )) I_{m-n}(2 (1-p) \beta )}{I_m(2 \beta ){}^2} \\
& \quad +\frac{\beta  (1-p) \phi  (I_{n-1}(2 p \beta )+I_{n+1}(2 p \beta )) (I_{m-n-1}(2 (1-p) \beta )+I_{m-n+1}(2 (1-p) \beta ))}{I_m(2 \beta )} \\
& \quad -\frac{\beta  p \phi  (I_{n-1}(2 p \beta )+I_{n+1}(2 p \beta )) (I_{m-n-1}(2 (1-p) \beta )+I_{m-n+1}(2 (1-p) \beta ))}{I_m(2 \beta )} \\
& \quad +\frac{\beta  \phi  (p (I_{n-2}(2 p \beta )+I_n(2 p \beta ))+p (I_n(2 p \beta )+I_{n+2}(2 p \beta ))) I_{m-n}(2 (1-p) \beta )}{I_m(2 \beta )}
.
\end{align*}
Based on the properties of the modified Bessel function of the first kind,
$$
\sum_{y=-\infty}^{\infty}\left(\frac{\theta_1}{\theta_2}\right)^{\frac{y}{2}} I_y\left(2 \sqrt{\theta_1 \theta_2}\right)=e^{\theta_1+\theta_2},
$$
$$
\sum_{y=-\infty}^{\infty} y\left(\frac{\theta_1}{\theta_2}\right)^{\frac{y}{2}} I_y\left(2 \sqrt{\theta_1 \theta_2}\right)=\left(\theta_1-\theta_2\right) e^{\theta_1+\theta_2},
$$
$$
\sum_{y=-\infty}^{\infty} y^2\left(\frac{\theta_1}{\theta_2}\right)^{\frac{y}{2}} I_y\left(2 \sqrt{\theta_1 \theta_2}\right)=\left(\theta_1+\theta_2+\left(\theta_1-\theta_2\right)^2\right) e^{\theta_1+\theta_2},
$$
$$
I_y(\theta) = I_{-y}(\theta),
\sum_{y=-\infty}^{\infty} I_y(\theta)=e^\theta,
\sum_{y=-\infty}^{\infty} y I_y(\theta)=0,
$$
$$
\sum_{y=-\infty}^{\infty} y^2 I_y(\theta)=\theta e^\theta \quad  \forall \theta>0,
\frac{\partial I_y(\theta)}{\partial \theta}=\frac{1}{2}\left(I_{y-1}(\theta)+I_{y+1}(\theta)\right),
$$
thus,
$$
\frac{e^\theta} {2} \leq \sum_{y=0}^{\infty} y I_y(\theta) \leq \frac{\theta e^\theta} {2},
$$
and
$$
e^\theta \leq 2\sum_{y=0}^{\infty} y I_y(\theta) = 2\sum_{y=0}^{\infty} \left|y I_y(\theta)\right|\leq 2\sum_{y=0}^{\infty} y^2 \left|I_y(\theta)\right| \leq
\theta  e^\theta,
$$
Then, for $\frac{\partial Z(m,n)}{\partial p}$, The problem is to estimate the bound of 
$$
\sum_{y=-\infty}^{\infty} \frac{I_y(2p\beta)I_{m-y-1}(2(1-p)\beta)}{I_m(2\beta)},
$$
it is obvious that
$$
\sum_{y=-\infty}^{\infty} \frac{I_y(2p\beta)I_{m-y-1}(2(1-p)\beta)}{I_m(2\beta)} = \frac{1}{I_m(2\beta)}\sum_{y=-\infty}^{\infty}I_y(2p\beta)I_{m-y-1}(2(1-p)\beta) ,
$$
thus we only need to bound
$$
\sum_{y=-\infty}^{\infty}I_y(2p\beta)I_{m-y-1}(2(1-p)\beta) ,
$$
by symmestry,
$$
\sum_{y=-\infty}^{\infty}I_y(2p\beta)I_{m-y-1}(2(1-p)\beta) = \sum_{y=-\infty}^{\infty}I_y(2p\beta)I_{y-m+1}(2(1-p)\beta).
$$
Now we consider the value of $I_y(\theta)$. We know that for  fixed $\theta$,
$$
\sum_{y=0}^{\infty} I_y(\theta)=\frac{e^\theta}{2},
$$
and there is a constant $M$ such that 
$$
\frac{e^\theta}{2} \leq M
,
$$
thus the number of $y$ such that $|I_y(\theta)| \leq 1$ is almost $M-1$. Thus other infinitely many $y$ satisfy $ |I_y(\theta)| < 1 $, moreover, 
$I_y(\theta)^2 \leq |I_y(\theta)|$. Denote 
$
I = \sum_{y=-\infty}^{\infty}I_y(2p\beta)I_{y-m+1}(2(1-p)\beta).
$
We can find large enough constant $M$ such that
\begin{align*}
I&
=
\sum_{y=k_1}^{k_M}I_y(2p\beta)I_{y-m+1}(2(1-p)\beta) 
+ \sum_{y=k_{M+1}}^{\infty}I_y(2p\beta)I_{y-m+1}(2(1-p)\beta) \\
&\leq \sum_{y=k_1}^{k_M}I_y(2p\beta)I_{y-m+1}(2(1-p)\beta) 
 +\frac{1}{2}\sum_{y=k_{M+1}}^{\infty}\left(I_y(2p\beta)^2 + I_{y-m+1}(2(1-p)\beta)^2\right)\\
&\leq \sum_{y=k_1}^{k_M}I_y(2p\beta)I_{y-m+1}(2(1-p)\beta) 
+\frac{1}{2}\sum_{y=k_{M+1}}^{\infty}\left(I_y(2p\beta) + I_{y-m+1}(2(1-p)\beta)\right)
\end{align*}
where $k_1,k_2,\ldots,k_M,k_{M+1},\ldots$ is a rearrangement of $N$ such that when $y \in \{k_1,k_2,\ldots,k_M\}$, $|I_y(\theta)| \geq 1$, 
and when $y \in \{k_M,\ldots,\}$, 
$|I_y(\theta)|<1$.
It can be seen that the former of the right of the inequation is finite, and the latter is convergent.
For higher derivatives, the problem to deal with is the product of  more Bessel functions,
and the rest follows the same pattern.
Thus, we can see that, for fixed $m$,
$$
\sum_{n = -\infty}^{\infty}\left|\frac{\partial Z(m,n)}{\partial p}\right| < \infty
,
\sum_{n = -\infty}^{\infty}\left|\frac{\partial Z(m,n)}{\partial \beta}\right| < \infty,
$$
$$
\sum_{n = -\infty}^{\infty}\left|\frac{\partial^2 Z(m,n)}{\partial p^2}\right| < \infty
,
\sum_{n = -\infty}^{\infty}\left|\frac{\partial^2 Z(m,n)}{\partial \beta^2}\right| < \infty,
\sum_{n = -\infty}^{\infty}\left|\frac{\partial^2 Z(m,n)}{\partial p\partial \beta}\right| < \infty.
$$
Next we are on the position to discuss the derivative of $q(m,n)$.
\begin{align*}
\frac{\partial q(m,n)}{\partial \theta_1} &  = -e^{-\theta_1-\theta_2} \left(\frac{\theta_1}{\theta_2}\right)^{n/2} I_n\left(2 \sqrt{\theta_1 \theta_2}\right) \\
& \quad +\frac{1}{2 \theta_2} n   e^{-\theta_1-\theta_2} \left(\frac{\theta_1}{\theta_2}\right)^{\frac{n}{2}-1} I_n\left(2 \sqrt{\theta_1 \theta_2}\right) \\
& \quad +\frac{1}{2 \sqrt{\theta_1 \theta_2}} \theta_2   e^{-\theta_1-\theta_2} \left(\frac{\theta_1}{\theta_2}\right)^{n/2} \left(I_{n-1}\left(2 \sqrt{\theta_1 \theta_2}\right)+I_{n+1}\left(2 \sqrt{\theta_1 \theta_2}\right)\right) ,\\
\frac{\partial q(m,n)}{\partial \theta_2} &  =   -e^{-\theta_1-\theta_2} \left(\frac{\theta_1}{\theta_2}\right)^{n/2} I_n\left(2 \sqrt{\theta_1 \theta_2}\right)\\
& \quad -\frac{1}{2 \theta_2^2} \theta_1 n   e^{-\theta_1-\theta_2} \left(\frac{\theta_1}{\theta_2}\right)^{\frac{n}{2}-1} I_n\left(2 \sqrt{\theta_1 \theta_2}\right)\\
& \quad +\frac{1}{2 \sqrt{\theta_1 \theta_2}} \theta_1   e^{-\theta_1-\theta_2} \left(\frac{\theta_1}{\theta_2}\right)^{n/2} \left(I_{n-1}\left(2 \sqrt{\theta_1 \theta_2}\right)+I_{n+1}\left(2 \sqrt{\theta_1 \theta_2}\right)\right), \\
\frac{\partial^{2} q(m,n)}{\partial \theta^{2}_1} & = 
\left(\frac{\left(\frac{n}{2}-1\right) n e^{-\theta_1-\theta_2} \left(\frac{\theta_1}{\theta_2}\right)^{\frac{n}{2}-2}}{2 \theta_2^2}+e^{-\theta_1-\theta_2} \left(\frac{\theta_1}{\theta_2}\right)^{n/2}-\frac{n e^{-\theta_1-\theta_2} \left(\frac{\theta_1}{\theta_2}\right)^{\frac{n}{2}-1}}{\theta_2}\right) I_n\left(2 \sqrt{\theta_1 \theta_2}\right) \\
& \quad +  \left(\frac{n e^{-\theta_1-\theta_2} \left(\frac{\theta_1}{\theta_2}\right)^{\frac{n}{2}-1}}{2 \theta_2}-e^{-\theta_1-\theta_2} \left(\frac{\theta_1}{\theta_2}\right)^{n/2}\right) \left(I_{n-1}\left(2 \sqrt{\theta_1 \theta_2}\right)+I_{n+1}\left(2 \sqrt{\theta_1 \theta_2}\right)\right) \frac{\theta_2}{\sqrt{\theta_1 \theta_2}} \\
& \quad + \left( I_{n-2}\left(2 \sqrt{\theta_1 \theta_2}\right)+2I_n\left(2 \sqrt{\theta_1 \theta_2}\right)
+I_{n+2}\left(2 \sqrt{\theta_1 \theta_2}\right) \right) e^{-\theta_1-\theta_2} \left(\frac{\theta_1}{\theta_2}\right)^{n/2} \frac{\theta_2}{4\theta_1}   
 \\
& \quad - \left(I_{n-1}\left(2 \sqrt{\theta_1 \theta_2}\right)+I_{n+1}\left(2 \sqrt{\theta_1 \theta_2}\right)\right) e^{-\theta_1-\theta_2} \left(\frac{\theta_1}{\theta_2}\right)^{n/2} \frac{\theta_2^2}{4 (\theta_1 \theta_2)^{3/2}}, \\
\frac{\partial^{2} q(m,n)}{\partial \theta^{2}_2} & = 
 \left(\frac{\theta_1^2 \left(\frac{n}{2}-1\right) n \left(\frac{\theta_1}{\theta_2}\right)^{\frac{n}{2}-2}}{2 \theta_2^4}+\frac{\theta_1 n \left(\frac{\theta_1}{\theta_2}\right)^{\frac{n}{2}-1}}{\theta_2^3}\right)I_n\left(2 \sqrt{\theta_1 \theta_2}\right) e^{-\theta_1-\theta_2}\\
& \quad +\left(\frac{\theta_1 n e^{-\theta_1-\theta_2} \left(\frac{\theta_1}{\theta_2}\right)^{\frac{n}{2}-1}}{\theta_2^2} 
 +e^{-\theta_1-\theta_2} \left(\frac{\theta_1}{\theta_2}\right)^{n/2} \right)
I_n\left(2 \sqrt{\theta_1 \theta_2}\right) \\
& \quad +  \left(-\frac{\theta_1 n e^{-\theta_1-\theta_2} \left(\frac{\theta_1}{\theta_2}\right)^{\frac{n}{2}-1}}{2 \theta_2^2}-e^{-\theta_1-\theta_2} \left(\frac{\theta_1}{\theta_2}\right)^{n/2}\right)\frac{\theta_1}{\sqrt{\theta_1 \theta_2}} \\
& \quad \quad \cdot \left(I_{n-1}\left(2 \sqrt{\theta_1 \theta_2}\right)+I_{n+1}\left(2 \sqrt{\theta_1 \theta_2}\right)\right)\\
& \quad -  \left(I_{n-1}\left(2 \sqrt{\theta_1 \theta_2}\right)+I_{n+1}\left(2 \sqrt{\theta_1 \theta_2}\right)\right)
e^{-\theta_1-\theta_2} \left(\frac{\theta_1}{\theta_2}\right)^{n/2}\frac{\theta_1^2}{4 (\theta_1 \theta_2)^{3/2}} \\
& \quad +  \left( I_{n-2}\left(2 \sqrt{\theta_1 \theta_2}\right)+2I_n\left(2 \sqrt{\theta_1 \theta_2}\right)
+I_{n+2}\left(2 \sqrt{\theta_1 \theta_2}\right)\right) 
e^{-\theta_1-\theta_2} \left(\frac{\theta_1}{\theta_2}\right)^{n/2}\frac{\theta_1^2}{4 \theta_1 \theta_2},\\
\frac{\partial^2 q(m,n)}{\partial \theta_1 \partial \theta_2} 
& =  -\frac{1}{2 \theta_2}n e^{-\theta_1-\theta_2} \left(\frac{\theta_1}{\theta_2}\right)^{\frac{n}{2}-1} I_n\left(2 \sqrt{\theta_1 \theta_2}\right) +e^{-\theta_1-\theta_2} \left(\frac{\theta_1}{\theta_2}\right)^{n/2} I_n\left(2 \sqrt{\theta_1 \theta_2}\right) \\
& \quad 
-  \left(\frac{n}{2}-1\right) n e^{-\theta_1-\theta_2} \left(\frac{\theta_1}{\theta_2}\right)^{\frac{n}{2}-2}\frac{\theta_1}{2 \theta_2^3} I_n\left(2 \sqrt{\theta_1 \theta_2}\right)\\
& \quad +\frac{(\theta_1-1)}{2 \theta_2^2}
  n e^{-\theta_1-\theta_2} \left(\frac{\theta_1}{\theta_2}\right)^{\frac{n}{2}-1} I_n\left(2 \sqrt{\theta_1 \theta_2}\right) \\
& \quad -\frac{\theta_1 \theta_2}{4 (\theta_1 \theta_2)^{3/2}}  e^{-\theta_1-\theta_2} \left(\frac{\theta_1}{\theta_2}\right)^{n/2} \left(I_{n-1}\left(2 \sqrt{\theta_1 \theta_2}\right)+I_{n+1}\left(2 \sqrt{\theta_1 \theta_2}\right)\right) \\
& \quad +\frac{(1-\theta_1-\theta_2)}{2 \sqrt{\theta_1 \theta_2}}
e^{-\theta_1-\theta_2} \left(\frac{\theta_1}{\theta_2}\right)^{n/2} \left(I_{n-1}\left(2 \sqrt{\theta_1 \theta_2}\right)+I_{n+1}\left(2 \sqrt{\theta_1 \theta_2}\right)\right)\\
& \quad + \frac{1}{4 } \left( I_{n-2}\left(2 \sqrt{\theta_1 \theta_2}\right)+2I_n\left(2 \sqrt{\theta_1 \theta_2}\right)
+I_{n+2}\left(2 \sqrt{\theta_1 \theta_2}\right)\right) 
e^{-\theta_1-\theta_2} \left(\frac{\theta_1}{\theta_2}\right)^{n/2}.
\end{align*}

And we rewrite the derivatives as follows:
$$
\frac{\partial q(m,n)}{\partial \theta_1} = \left(\frac{n}{2\theta_1}-1\right)   q(m,n) + \frac{1}{2} q(m,n-1) + \frac{\theta_2}{2\theta_1} q(m,n+1), 
$$
$$
\frac{\partial q(m,n)}{\partial \theta_2} = \left(-\frac{n}{2\theta_1}-1\right)   q(m,n) + \frac{\theta_1}{2\theta_2} q(m,n-1) +  \frac{1}{2}q(m,n+1), 
$$
\begin{align*}
\frac{\partial^{2} q(m,n)}{\partial \theta_1^{2}} 
&= \left(\frac{n^2-2n}{4\theta_1^{2}} - \frac{n}{\theta_1} + 1\right)   q(m,n) 
+ \left(\frac{2n-1}{4\theta_1 } -1\right) q(m,n-1) \\
 &\quad + \left(\frac{(2n-\theta_1)\theta_2}{4\theta_1^2} - \frac{\theta_2}{\theta_1}\right) q(m,n+1) +\frac{1}{2}q(m,n-2),\\  
\frac{\partial^{2} q(m,n)}{\partial \theta_2^{2}} &= \left(\frac{n^2-2n}{4\theta_2^{2}} + e^{-\theta_1-\theta_2} + \frac{2n}{\theta_2}\right)   q(m,n) 
+ \left( \frac{\theta_1}{4\theta_2^2}-1 - \frac{n\theta_1^2}{2\theta_2^3}    \right) q(m,n-1) \\
& \quad + \left(\frac{1}{4\theta_2} -1 -\frac{n}{2\sqrt{\theta_1\theta_2}} \right)q(m,n+1) 
+\frac{\theta_1^{2}}{2\theta_2^{2}}q(m,n-2), 
\end{align*}

since the modified Bessel function of the first kind is bounded, 
thus  it is obvious that
$$
\left|\sum_{n = -\infty}^{\infty}\frac{\partial q(m,n)}{\partial \theta_1}\right| < \infty,
\left|\sum_{n = -\infty}^{\infty}\frac{\partial q(m,n)}{\partial \theta_2}\right| < \infty,
$$
$$
\left|\sum_{n = -\infty}^{\infty}\frac{\partial^{2} q(m,n)}{\partial \theta_1^{2}}\right| < \infty,
\left|\sum_{n = -\infty}^{\infty}\frac{\partial^{2} q(m,n)}{\partial \theta_1^{2}}\right| < \infty.  
$$
Furthermore, the series converges absolutely, and we obtain
$$
\sum_{n = -\infty}^{\infty}\left|\frac{\partial q(m,n)}{\partial \theta_1}\right| < \infty,
\sum_{n = -\infty}^{\infty}\left|\frac{\partial q(m,n)}{\partial \theta_2}\right| < \infty,
$$
$$
\sum_{n = -\infty}^{\infty}\left|\frac{\partial^{2} q(m,n)}{\partial \theta_1^{2}}\right| < \infty,
\sum_{n = -\infty}^{\infty}\left|\frac{\partial^{2} q(m,n)}{\partial \theta_1^{2}}\right| < \infty.  
$$
 Consequently, the condition (C4)	is satisfied.

Now we caculate the log-partial derivatives for the unknown parameters as follows.\\
For the first  order derivatives, we have
$$
\begin{gathered}
\left|\frac{\partial \log P(m, n)}{\partial \phi}\right|=\left|\frac{Z(m, n)-q(m, n)}{P(m, n)}\right| \leq \frac{Z(m, n)}{P(m, n)}+\frac{q(m, n)}{P(m, n)} 
\leq \frac{1}{\phi(1-\phi)},
\end{gathered}
$$
$$
\begin{aligned}
\left|\frac{\partial \log P(m, n)}{\partial p}\right| = \left|\frac{1}{P(m, n)}\frac{\partial P(m, n)}{\partial p}\right| \leq \psi(m,n) \left|\frac{\partial P(m, n)}{\partial p}\right|,
\end{aligned}
$$
$$
\begin{aligned}
\left|\frac{\partial \log P(m, n)}{\partial \beta}\right| = \left|\frac{1}{P(m, n)}\frac{\partial  P(m, n)}{\partial \beta}\right| \leq \psi(m,n)\left|\frac{\partial  P(m, n)}{\partial \beta}\right|,
\end{aligned}
$$
$$
\left|\frac{\partial \log P(m, n)}{\partial \theta_1}\right| = \left|\frac{1}{P(m, n)}\frac{\partial  P(m, n)}{\partial \theta_1}\right| \leq \psi(m,n)\left|\frac{\partial  P(m, n)}{\partial \theta_1}\right| ,
$$
$$
\left|\frac{\partial \log P(m, n)}{\partial \theta_2}\right| = \left|\frac{1}{P(m, n)}\frac{\partial  P(m, n)}{\partial \theta_2}\right| \leq \psi(m,n) \left|\frac{\partial  P(m, n)}{\partial \theta_2}\right|,
$$
where $\psi$ is a constant relevant to $m$ and $n$,
and combined with the above results, the first derivatives can be bounded by some appropriate constant, respectively.
We know that in the stationary state
$$
E\left|\frac{\partial \log P\left(Z_1, Z_2\right)}{\partial \phi}\right|^2 <\infty,
E\left|\frac{\partial \log P\left(Z_1, Z_2\right)}{\partial p}\right|^2 <\infty,
$$
$$
E\left|\frac{\partial \log P\left(Z_1, Z_2\right)}{\partial \beta}\right|^2 <\infty,
E\left|\frac{\partial \log P\left(Z_1, Z_2\right)}{\partial \theta_1}\right|^2 <\infty,
E\left|\frac{\partial \log P\left(Z_1, Z_2\right)}{\partial \theta_2}\right|^2 <\infty.
$$
Due to the length of the paper, we will not present the second and third order derivatives, the core problem of estimating the bound of derivatives is the method to estimate the bound of product of Bessel functions  when  $m$ is fixed.\\
Analogously, denote $\boldsymbol{\omega}_1 = \phi$,
$\boldsymbol{\omega}_2 = p$,
$\boldsymbol{\omega}_3 = \beta$,
$\boldsymbol{\omega}_4 = \theta_1$,
$\boldsymbol{\omega}_5 = \theta_2$,
$$
E\left|\frac{\partial^2 \log P\left(Z_1, Z_2\right)}{\partial \boldsymbol{\omega}_u \partial\boldsymbol{\omega}_v}\right|^2 <\infty,
E\left|\frac{\partial^3 \log P\left(Z_1, Z_2\right)}{\partial \boldsymbol{\omega}_u \partial\boldsymbol{\omega}_v \partial\boldsymbol{\omega}_w}\right| <\infty,
 \quad u,v,w = 1,2,\ldots,5.
$$
The fisher information matrix $(\sigma_{ij})$ is well-defined and by  (C6) it is nonsingular. We can conclude that the conditions (C1)-(C7) are satisfied. 
 $\hfill\square$

\subsection{Proof of Existence of Stationary Solutions and Ergodic}
In this part we begin to show the existence of stationary solutions for the process $\{Z_n\}$, and prove that the stationary distribution is unique, then we present it is also aperiodic and ergodic. First we introduce the following proposition on the existence of stationary solutions for a Markov chain.
\newtheorem{prop}{Proposition}[subsection]
\begin{prop}
If $\{Z_n\}$ is a weak Feller chain with state-space $Z$ and if for any $\eta>0$ there exists a compact set $C \subseteq Z$ such that $P(x,C^c) < \eta$, for all $z \in Z$, where $P(x,C^c) < \eta$ is the transition probability from $z$ to $C^c$, then $\{Z_n\}$ is bounded in probability and there exists at least one stationary distribution for the chain.
\end{prop}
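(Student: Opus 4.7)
The strategy is the classical Krylov--Bogolyubov construction of an invariant measure, leveraging the uniform tightness hypothesis together with the weak Feller property. I will first establish boundedness in probability directly from the tightness assumption, and then extract a stationary distribution as a weak limit of Ces\`aro averages of the transition kernel.

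First I would upgrade the one-step tightness $P(x,C^c) < \eta$ to a uniform statement valid at every iterate: for any $n \geq 1$ and any $x$,
\[
P^n(x, C^c) \;=\; \int P(y, C^c) \, P^{n-1}(x,dy) \;\leq\; \sup_{y \in Z} P(y,C^c) \;<\; \eta,
\]
so every $n$-step transition puts mass less than $\eta$ on $C^c$, uniformly in the starting point. Consequently, for any initial distribution the law of $Z_n$ has mass at least $1-\eta$ on the same compact set $C$ for all $n$, which is exactly boundedness in probability.

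Next I would construct a candidate stationary distribution. Fix an arbitrary $x_0 \in Z$ and set
\[
\mu_n(A) \;=\; \frac{1}{n} \sum_{k=1}^n P^k(x_0, A).
\]
Each $\mu_n$ is a probability measure and the bound above gives $\mu_n(C^c) < \eta$ for every $n$, so $\{\mu_n\}$ is tight; by Prokhorov's theorem a subsequence $\mu_{n_j}$ converges weakly to some probability measure $\mu$. To verify $\mu P = \mu$, I would test against an arbitrary bounded continuous $f$ and use the telescoping identity
\[
\int f \, d(\mu_n P) \;-\; \int f \, d\mu_n \;=\; \frac{1}{n}\bigl(P^{n+1}f(x_0) - Pf(x_0)\bigr),
\]
whose right-hand side tends to $0$ because $f$ is bounded. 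The weak Feller hypothesis guarantees $Pf$ is bounded and continuous, so both $\int Pf \, d\mu_{n_j} \to \int Pf \, d\mu = \int f\, d(\mu P)$ and $\int f\, d\mu_{n_j} \to \int f\, d\mu$ along the chosen subsequence. Equating the limits yields $\int f\, d(\mu P) = \int f\, d\mu$ for every bounded continuous $f$, hence $\mu P = \mu$.

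The only delicate point is having a test-function class rich enough that weak convergence of $\mu_{n_j}$ together with $\mu_{n_j} P - \mu_{n_j} \to 0$ forces true invariance, which in turn is exactly the content of the weak Feller assumption combined with Prokhorov; no further analytic estimate is required. For the chain of interest here the state space is $\boldsymbol{Z}$ with the discrete topology, so compact sets are finite, weak convergence reduces to pointwise convergence of the mass function, and the weak Feller property is automatic. The abstract argument thus applies directly, and the remaining work in the subsection is only to verify the uniform tightness condition from the explicit MESINAR(1) transition law.
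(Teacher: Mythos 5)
Your argument is a correct proof of the abstract proposition, but it is not the route the paper takes. The Chapman--Kolmogorov step $P^n(x,C^c)=\int P^{n-1}(x,dy)\,P(y,C^c)\le \eta$ does give boundedness in probability, and the Krylov--Bogolyubov construction (tightness of the Ces\`aro averages $\mu_n$, Prokhorov, the telescoping identity $\int f\,d(\mu_nP)-\int f\,d\mu_n=\tfrac{1}{n}(P^{n+1}f(x_0)-Pf(x_0))$, and the weak Feller property to pass $\int Pf\,d\mu_{n_j}\to\int Pf\,d\mu$) correctly yields an invariant measure; your remark that on $\mathbf{Z}$ with the discrete topology the Feller property is automatic and compact means finite is also right. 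The paper, however, does not prove the proposition at all: it quotes it as a known criterion from Markov chain theory (essentially the Meyn--Tweedie boundedness-in-probability/existence result for weak Feller chains), and the proof environment attached to it is devoted entirely to verifying the hypothesis for the MESINAR(1) chain, by choosing $C=\{-c,\dots,c\}$ and bounding $P(Z_t\ge c\mid Z_{t-1}=x)$ and $P(Z_t\le -c\mid Z_{t-1}=x)$ through Markov's inequality applied to the conditional mean $\phi p\delta x+(1-\phi)(\theta_1-\theta_2)$. So your proposal supplies the general argument the paper omits---a legitimate, self-contained alternative---but it only gestures at the model-specific tightness check, which is the paper's actual content here and arguably the delicate part, since the conditional-mean bound depends on the starting state $x$ and one must argue that a single compact set works uniformly (or at least in the sense required by the criterion) rather than merely for each fixed $x$.
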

Now we are on the position to verify whether the  Markov chain satisfies Proposition A.2.1.
\begin{proof}
We choose $C = \{-c,-c+1, \ldots, c-1,c\}$, where $c > 0$ and $c \in \mathbb{N}$. It is obvious that $C$ is a compact set since the number of elements in $C$ is finite. By Markov's inequality,  we have
\begin{equation}
\left\{
\begin{aligned}
\nonumber
&P(Z_{t}\geq c|Z_{t-1}= x) \leq \frac{E[Z_{t}|Z_{t-1}=x]}{c}=\frac{\phi p \delta x+(1-\phi)(\theta_1-\theta_2)}{c}, Z_t  \geq 0,\\
&P(Z_{t}\leq -c|Z_{t-1}= x) \leq -\frac{E[Z_{t}|Z_{t-1}=x]}{c}=-\frac{\phi p \delta x+(1-\phi)(\theta_1-\theta_2)}{c}, Z_t  < 0.\\
\end{aligned}
\right.
\end{equation}
Then, given $\eta>0$, choose enough large $c$ such that $\phi p \delta x+(1-\phi)(\theta_1-\theta_2)<c\eta$.
\end{proof}
We next turn to  prove the uniqueness of a stationary distribution. We shall present $\{Z_n\}$  satisfies Doeblin's condition, i.e., there exists a probability measure $\nu$  with the property that, for some $m \geq 1$, $\eta >0$, and $\gamma>0$, such that
$$\nu(B)>\eta \Rightarrow P^m(x,B)\geq \gamma,
 \quad {\forall}x \in X.$$
Then $\{Z_n\}$ has a unique stationary distribution and is uniformly ergodic. We need to verify whether the above conditions hold for $\{Z_n\}$.
\begin{proof}
Define the measure $\nu$ to have unit point mass at $\{c\}$, where $c \in \mathbb{N}$. We only need to consider Borel sets $B$ with $c \in B$, and it is revealed that we can choose proper  $\eta$ such that $\nu(\{c\}) = \nu(B) > \eta$. Since $P(\{c\})<P(B)$, we have
\begin{align*}
P(Z_{t}\in B|Z_{t-1}=x)&\geq P(Z_{t}=c|Z_{t-1}=x) \\
&=
\phi \frac{I_{\delta c}(2 p \beta) I_{x-\delta c}(2(1-p) \beta)}{I_{x}(2 \beta)}
+(1-\phi)e^{-\theta_1-\theta_2}\left(\frac{\theta_1}{\theta_2}\right)^{c / 2} I_{c}\left(2 \sqrt{\theta_1 \theta_2}\right).
\end{align*}
Case 1: If $\phi = 0$, we can choose $\gamma$ such that
\begin{align*}
P(Z_{t}\in B|Z_{t-1}=x)&\geq P(Z_{t}=c|Z_{t-1}=x) =
e^{-\theta_1-\theta_2}\left(\frac{\theta_1}{\theta_2}\right)^{c / 2} I_{c}\left(2 \sqrt{\theta_1 \theta_2}\right)=\gamma>0.
\end{align*}
Case 2: If $\phi = 1$, we can choose $\gamma$ such that
\begin{align*}
P(Z_{t}\in B|Z_{t-1}=x)&\geq P(Z_{t}=c|Z_{t-1}=x) =
\frac{I_{\delta c}(2 p \beta) I_{x-\delta c}(2(1-p) \beta)}{I_{x}(2 \beta)}
=\gamma>0.
\end{align*}
Case 3: If $0<\phi <1$, we can choose $\gamma$ such that
\begin{align*}
P(Z_{t}\in B|Z_{t-1}=x)&\geq P(Z_{t}=c|Z_{t-1}=x) \\&=
\phi \frac{I_{\delta c}(2 p \beta) I_{x-\delta c}(2(1-p) \beta)}{I_{x}(2 \beta)}
+(1-\phi)e^{-\theta_1-\theta_2}\left(\frac{\theta_1}{\theta_2}\right)^{c / 2} I_{c}\left(2 \sqrt{\theta_1 \theta_2}\right).\\
& =\gamma >0.
\end{align*}
Thus we can choose $m = 1$, and Doeblin's condition is satisfied. Denote $P(Z_{t}=x|Z_{t-1}=x)=P_{xx}$, for  and we can calculate
%
\begin{equation}
P_{xx}=\left\{
\begin{aligned}
\nonumber
& \phi \frac{I_{ x}(2 p \beta) I_{0}(2(1-p) \beta)}{I_{x}(2 \beta)}
+(1-\phi)e^{-\theta_1-\theta_2}\left(\frac{\theta_1}{\theta_2}\right)^{x / 2} I_{x}\left(2 \sqrt{\theta_1 \theta_2}\right)>0, \delta  =1,\\
&\phi \frac{I_{ x}(2 p \beta) I_{2 x}(2(1-p) \beta)}{I_{x}(2 \beta)}
+(1-\phi)e^{-\theta_1-\theta_2}\left(\frac{\theta_1}{\theta_2}\right)^{x / 2} I_{x}\left(2 \sqrt{\theta_1 \theta_2}\right)>0, \delta  =-1.\\
\end{aligned}
\right.
\end{equation}
Therefore, $\{Z_n\}$ is aperiodic, and we can conclude that $\{Z_n\}$ is uniformly ergodic.
\end{proof}
\end{enumerate}

\end{document}